\documentclass[11pt, a4paper]{article}

\usepackage{typearea} 
\paperwidth 8.1in \paperheight 11.5in \typearea{15}

\usepackage[left=2cm, right=2cm, top=2cm, bottom = 2cm]{geometry}
\usepackage{subfigure}
\setlength{\footskip}{25pt}

\linespread{1}
\newcommand\thickvrule[1][1.5pt]{\vrule width #1}
\usepackage{appendix}
\usepackage{endnotes}
\usepackage{caption}
\usepackage{array}
\usepackage{amsthm}
\usepackage{authblk}
\usepackage{graphicx}
\usepackage[lofdepth,lotdepth]{subfig}
\usepackage{dsfont}
\usepackage{amssymb}
\usepackage{amsmath,amssymb,xspace,multirow,multicol,diagbox}
\usepackage{array}
\usepackage{makecell}

\setcellgapes{3pt}
\makegapedcells

\usepackage{algorithm,algpseudocode}

\usepackage{ bbold }

\def\D{\ensuremath{\cal D}\xspace}
\def\ar{{\sf ASR}\xspace}
\def\ap{{\sf ASP}\xspace}

\def\sse{\subseteq}

\def\opt{\ensuremath{\mathsf{OPT}}\xspace}

\def\t{\ensuremath{\mathcal{T}}\xspace}

\usepackage[utf8]{inputenc}
\usepackage[english]{babel}

\newtheorem{theorem}{Theorem}[section]
\newtheorem{corollary}{Corollary}[theorem]
\newtheorem{claim}{Claim}[theorem]
\newtheorem{lemma}[theorem]{Lemma}
\newtheorem{proposition}{Proposition}[theorem]
\newtheorem{definition}{Definition}[theorem]

\providecommand{\keywords}[1]{\textbf{\textit{Keywords}} #1}
\newcommand{\ignore}[1]{}

\begin{document}

\title{Adaptive Submodular Ranking and Routing}

\author[1]{Fatemeh Navidi\thanks{navidi@umich.edu}}
\author[2]{Prabhanjan Kambadur\thanks{pkambadur@bloomberg.net}}
\author[1]{Viswanath Nagarajan\thanks{viswa@umich.edu}}
\affil[1]{University of Michigan,  Ann Arbor, MI, USA}
\affil[2]{Bloomberg LP, 731 Lexington Avenue, NY, NY, USA}

\renewcommand\Authands{ and }
\date{}
\maketitle
\begin{abstract}
We study a general  stochastic ranking problem where an algorithm needs to adaptively select a sequence of elements 
so as to ``cover'' a random scenario (drawn from a known distribution) at minimum expected cost. The coverage of each scenario is captured by an individual  submodular function, where the scenario is said to be covered when its function value goes above a given threshold. We obtain a logarithmic factor approximation algorithm for this adaptive ranking problem, which is the best possible (unless $P=NP$).  This problem unifies and generalizes many previously studied problems with applications in search ranking and active learning. The approximation ratio of our algorithm either matches or improves the best result known in each of these special cases. Furthermore, we extend our results to an adaptive vehicle routing problem, where costs are determined by an underlying metric. This routing problem is a significant generalization of the previously-studied adaptive traveling salesman and traveling repairman problems. Our approximation ratio  nearly matches the best bound known for these special cases. Finally, we present   experimental results for some applications of adaptive ranking.
\end{abstract}

\keywords{Submodularity, Stochastic Optimization, Approximation Algorithms} 

\section{Introduction}

Many stochastic optimization problems can be viewed as  
sequential decision processes of the following form. There is a  known distribution \D over a set of 
scenarios, and the goal is to cover the 
 unknown realized scenario $i^*$ drawn from $\D$.
In each step, an algorithm chooses an element which  partially covers $i^*$ and receives  
some feedback from that element.
This feedback is then used  to update  the distribution over scenarios (using conditional probabilities). 
So  any solution in this setting is an adaptive sequence of elements. 
The objective is to minimize the expected cost incurred  to cover the realized  scenario $i^*$. 

Furthermore, many different criteria to cover a scenario can be modeled as covering a  suitable submodular function.  
Submodular functions are widely used in many domains,  e.g.  game theory,  social networks,  search ranking and document summarization; see~\cite{Shapley1971,KempeKT15,PrasadJB14,LB11}.

As an example of the class of problems that we address, consider a medical diagnosis application. There is a patient with an unknown disease and there are several possible tests that can be performed. Each test has a certain cost and its outcome (feedback) can be used to restrict the set of possible diseases.  There are also \emph{a priori} probabilities associated with each disease. The task here is to obtain an adaptive sequence of tests so as to identify the disease at minimum expected cost. 

As another example, consider a search engine application. On any query,  different user-types are often  interested in viewing different 
 search results. Each user-type is associated with  the set of results they are interested in  and a threshold number of results they would like to see. There is also a probability distribution over user-types. After displaying each result (or a block of  small number of results), the search engine receives feedback on which of those results were of interest to the realized user-type.  The goal is to provide an adaptive sequence of results so as to minimize the  expected number of results  until the user-type is satisfied. 

Yet another example arises in route planning for disaster management. After a major disaster such as an earthquake, normal communication networks are usually unavailable. So rescue operators would not know the precise locations of victims before actually visiting them. However, probabilistic information 
is often available based on  geographical data  etc. Then the task  is to plan an adaptive route for a rescue vehicle that visits all the victims within  minimum expected time.

In this paper, we study an abstract stochastic optimization problem in the setting described above which
unifies and generalizes many previously-studied  problems such as \emph{optimal decision trees} studied in \cite{HR76}, \cite{KPB99}, \cite{D01}, \cite{CPRAM11}, \cite{GNR17} and \cite{CLS14}, \emph{equivalence class determination} (see \cite{GKR10} and \cite{BBS12}), \emph{decision region determination} studied in \cite{JCKKBS14} and \emph{submodular ranking} studied in \cite{AG11} and \cite{INZ12}. We obtain an algorithm with the best-possible approximation guarantee in all these special cases. We also obtain the first approximation algorithms for some other natural problems that are captured by  our framework,  such as stochastic versions of knapsack cover and matroid basis with correlated distributions. 
Moreover, our algorithm is very simple to state and implement. We also present experimental results on the optimal decision tree problem  and our algorithm performs very well.

We extend our framework to a vehicle-routing setting as well, where the elements are located in  a metric and the cost corresponds to travel distance/time between these locations. As special cases, we recover the adaptive traveling salesman and repairman problems that were studied in \cite{GNR17}. Our approximation ratio almost  matches the best result known for these special cases. Our approach has the advantage of being able to solve a  more general problem while allowing for a simpler analysis.  We note that submodular objectives are also commonly utilized in vehicle routing problems, see \cite{CP05} and references therein for theoretical work and \cite{SinghKGK09} for applications in information acquisition and robotics.

For some stochastic optimization problems, one can come up with approximately optimal solutions using static (non-adaptive) solutions that are insensitive to the feedback obtained, see e.g. stochastic (maximization) knapsack in  \cite{DGV08}  and stochastic matching in \cite{BGLMNR12}. However, this is not the case for the adaptive submodular ranking problem.
For all the special cases mentioned above, there are instances where the optimal adaptive value is much less than the optimal non-adaptive value.  Thus, it is important to come up with an adaptive algorithm.

\subsection{Adaptive Submodular Ranking} \label{subsec:defn}
We start with some basics. A set function $f:2^U\rightarrow \mathbb{R}_+$ on ground set $U$ is said to be submodular if $f(A)+f(B)\ge f(A\cap B) + f(A\cup B)$
for all
$A,B\sse U$. 
The function $f$ is said to be monotone if $f(A)\le f(B)$ for all $A\sse B\sse U$. We assume that set functions are given in the standard \emph{value oracle} model, i.e. we can evaluate $f(S)$ for any $S\sse U$ in polynomial time.

In the adaptive submodular ranking problem (\ar) we have a ground set $U$ of $n$ \emph{elements} with positive costs $\{c_e\}_{e\in U}$. 
We also have  $m$ \emph{scenarios} with a probability distribution \D given by probabilities $\{p_i\}_{i=1}^m$  totaling to one. 
 Each scenario $i\in[m]:=\{1,\cdots,m\}$ is specified by: 
\begin{itemize}
\item[(i)] a  \emph{monotone submodular} function $f_i:2^{U}\rightarrow [0,1]$ where $f_i(\emptyset)=0$ and $f_i(U)=1$ (any monotone submodular function can be expressed in this form by scaling), and 
\item[(ii)]  a \emph{feedback} function $r_i:U\rightarrow G$ where $G$ is a set of possible feedback values. 
\end{itemize}
We note that $f_i$ and $r_i$ need not be related in any way: this flexibility allows us to capture many different applications. 
Scenario $i\in[m]$ is said to  be \emph{covered} by any  subset $S\sse U$ of elements such that $f_i(S )=1$. 
 The goal in \ar is to  adaptively  find a sequence of elements in $U$ that minimizes the expected cost to cover a \emph{random scenario} $i^*$ drawn from \D. The identity of $i^*$ is initially unknown to the algorithm. 
 When the algorithm selects an element $e\in U$, it receives some feedback value $g=r_{i^*}(e) \in G$ which can be used to update the probability distribution of $i^*$ using  conditional probabilities. In particular, the probability of any scenario $i\in[m]$ with $r_i(e) \neq g$ would become zero. 
The sequence of selected elements is \emph{adaptive} because it depends on the feedback received.

\paragraph{Example:}
Figure~\ref{fig:example} demonstrates an  example for ASR. In this example we have elements $U = \{e_1, e_2, e_3, e_4, e_5, e_6\}$ and 3 scenarios.  Each element has cost 1  and there is a  uniform probability distribution over scenarios. Each senario $i\in\{1,2,3\}$ is associated with a subset $S_i$ with submodular function $f_i(S) = \frac{|S\cap S_i|}{|S_i|}$ and binary feedback function $r_i(e) = \mathbb{1}[e\in S_i]$. So the realized scenario $i^*$ will be covered with subset $S\sse U$ if and only if $S_{i^*}\subseteq S$. And, the feedback from an element $e$ is one if and only if $e\in S_{i^*}$. The  decision tree in Figure~\ref{fig:example} represents  a feasible solution with expected cost  $\frac{1}{3}\cdot 4 + \frac{1}{3}\cdot 3 + \frac{1}{3}\cdot 3 = \frac{10}{3}$.

\begin{figure}
 \centering
 \includegraphics[scale=1]{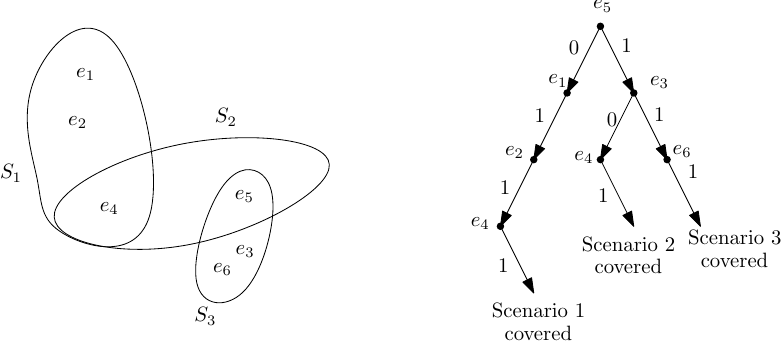}
 \caption{An example for \ar and a feasible solution}
 \label{fig:example}
\end{figure}

\def\c{\mathsf{cov}}

A  solution  to \ar is represented by a decision tree \t, where each node is labeled by an element $e\in U$ and the branches out of such a node are labeled by  the  possible feedback we can receive  after selecting $e$. 
Each node in \t also corresponds to a \emph{state} which is specified by the set $E$ of previously selected elements and the feedback $\theta_e\in G$ of each $e\in E$. From this information, we can obtain a more abbreviated version of the state as $(E,H)$ where $H$ denotes the set of uncovered and compatible scenarios based on the observed feedback. Formally,  
$$H=\{ i\in [m]\,:\, f_i(E)<1,\,\, r_i(e) = \theta_e \mbox{ for all }e \in E\}$$

Every scenario $i\in [m]$ traces a root-leaf path in the decision tree \t  which at any node labeled by element $e\in U$, 
takes the branch labeled by feedback $r_{i}(e)$. Let $T_i$ denote the sequence of elements on this path. In a feasible decision tree \t, 
each scenario $i\in [m]$ must be covered, i.e. $f_i(T_i) = 1$. 
 The cost $C_\t(i)$ of \t  under scenario $i$ is the total cost of the shortest prefix $\overline{T}_i$ of $T_i$ such that $f_i( \overline{T}_i ) = 1$.
The objective in \ar is to minimize the expected cost $\sum_{i=1}^m p_i\cdot \left( \sum_{e\in \overline{T}_i} c_e\right)$. We emphasize that multiple scenarios may trace the same path in $\t$: in particular, it is \emph{not} necessary to identify the realized scenario $i^*$ in order to cover it. 

We also note that  cost is  only incurred until the  realized scenario $i^*$ gets covered, even though the algorithm may not know this.  
In  applications  where scenarios correspond to users and the goal is to minimize  cost incurred by the users, this is the natural definition. An example is the \emph{multiple intent re-ranking} problem which models the search engine application (see Section~\ref{subsec:MIR}).  
However, in some other  applications (such as \emph{optimal decision tree}), we are   interested in   algorithms that know  exactly when to stop. For the applications that we consider, it turns out that this is still possible using the above definition: see Section~\ref{sec:appl} for details.

An important parameter in the analysis of our algorithm is the following:
\begin{equation}\label{eq:epsilon-defn}
\epsilon \quad := \quad \min_{\substack{e\in U : f_i(S\cup e) > f_i(S)\\ i\in [m],\, S\sse U}} \quad f_i(S\cup e) - f_i(S).
\end{equation}
It measures the minimum  positive  incremental value of any element.  Such a parameter appears in all results on the submodular cover problem, eg. \cite{W82}, \cite{AG11}.

\subsection{Adaptive Submodular Routing}
In the adaptive submodular routing problem (\ap) we have a ground set $U$ of $n$ \emph{elements} which are located at vertices of a  metric $(U\cup\{s\},d)$, where $s$ is a specified root vertex. Here  $d:U\times U \rightarrow \mathbb{R}_+$ is a cost function that is symmetric (i.e. $d(x,y) = d(y,x)$  for all $x,y \in U$)  and satisfies triangle inequality (i.e. $d(x,y)+d(y,z)\geq d(x,z)$ for all $x,y,z\in U$). We will use the terms element and vertex interchangeably. As before, we have  $m$ \emph{scenarios} with a probability distribution \D given by probabilities $\{p_i\}_{i=1}^m$  totaling to one, and each scenario $i\in[m]$ is associated with functions $f_i$ and $r_i$. A feasible solution to ASP can again be represented by a decision tree \t, at the end of which each scenario is covered.  Note that in the actual solution,  we need to return  to the root $s$ after visiting the last vertex in \t. For any scenario $i$, let $\tau_i$ denote the root-leaf path traced in decision tree \t, and let $\pi_i$ denote the  shortest prefix of $\tau_i$ such that $f_i(\pi_i)=1$.  
The cost $C_\t(i)$ of \t  under scenario $i$ is the total cost of path $\pi_i$. 
Specifically, if $\pi_i = s,e_1,e_2,\cdots e_k$, the cost under scenario $i$ would be $C_\t(i) = d(s,e_1)+\sum_{i=1}^{k-1}d(e_i,e_{i+1})$. 
The objective is to minimize the expected cost $\sum_{i=1}^m p_i\cdot C_\t(i)$. As with \ar, cost in \ap is only incurred until the realized scenario $i^*$ is covered.  

This problem differs from \ar only in the definition of the cost: here we want to  minimize  the expected    metric-cost of the walk that covers $i^*$. Note also that \ap generalizes \ar (at the loss of  a factor 2). To see this, for any \ar instance, consider  the \ap instance on the metric $(U\cup\{s\},d)$   induced by a star with center $s$ and leaves $U$ where $d(s,e) = c_e$ for all $e\in U$.

\subsection{Results} 
Our main result is an $O(\log \frac{1}{\epsilon} + \log m)$-approximation algorithm for adaptive submodular ranking (\ar) where $\epsilon>0$ is as defined in~\eqref{eq:epsilon-defn} and $m$ is the number of scenarios. Assuming $P\ne NP$, this result is asymptotically the best possible  even when $m=1$. This is because  the set cover problem on $k$ elements    is a special case of \ar with $m=1$ and   parameter $\epsilon=1/k$, and     \cite{DinurS14} showed that  approximating set cover to within a  $(1-\epsilon) \ln k$ factor (for any $\epsilon>0$) is NP-hard.  Our  algorithm  is a  simple adaptive greedy-style algorithm. At each step, we assign a score to each remaining element and select the element with maximum score.
Such a simple algorithm was previously unknown even in the special case of optimal decision tree,
despite a large number of papers on this topic, including \cite{HR76}, \cite{KPB99}, \cite{D01}, \cite{AH12}, \cite{CPRAM11}, \cite{GB09}, \cite{GNR17}, \cite{GK11} and \cite{CLS14} . 

For adaptive submodular routing (\ap) we provide an $O(\log^{2+\delta}n \cdot ( \log \frac{1}{\epsilon} + \log m))$-approximation algorithm where $\delta>0$ is any fixed constant and $\epsilon$ is as defined in \eqref{eq:epsilon-defn}. This algorithm utilizes some ideas from the algorithm for \ar, and involves combining a number of smaller tours into the final solution. We also make use of an algorithm for the (deterministic) submodular orienteering problem in a black-box fashion.  Our result is nearly the best-possible because the group Steiner problem studied in \cite{GKR00} is a special case of \ap with $m=1$ and parameter $\epsilon=1/k$ where $k$ denotes the number of groups.  There is an $\Omega(\log^{2-\delta} n)$ factor hardness of approximation for  group Steiner by \cite{HK03} and the best  approximation ratio known is $O(\log^2n \cdot \log k)$ from \cite{GKR00}.

We show that our framework is widely applicable by demonstrating a number of previously-studied stochastic optimization problems as special cases. In many cases, we match or improve upon prior approximation guarantees. 
We also obtain the first approximation algorithms for some other stochastic problems. More details on these applications can be found in Section~\ref{sec:appl}.

Finally, in Section~\ref{sec:experiments} we provide  computational results for the optimal decision tree problem (and its generalized version). We use a dataset  arising in the identification of toxic chemicals based on binary symptoms. Our algorithm performs very well compared to some other natural algorithms.

\paragraph{\bf Outline of key techniques.}  Our algorithm for \ar involves repeatedly selecting an element that maximizes a combination of (i) the expected increase in function value  relative to the target of one, and (ii) a measure of gain in identifying the realized scenario. See Equation~\eqref{eq:alg} for the formal selection criterion. 
Our analysis  provides new ways of reasoning about adaptive decision trees. 
At a high level, our approach is similar to that for the minimum-latency TSP in \cite{BlumCCPRS94}  and \cite{cgrt}. We upper bound the probability that the algorithm incurs a certain power-of-two cost $2^k$ in terms of the probability that the optimal solution incurs cost $2^k/\alpha$, which  is then used to establish an $O(\alpha)$  approximation ratio.  Our main technical contribution is in relating these completion  probabilities  in the algorithm and the optimal solution (see Lemma~\ref{lem:main}). In particular, a key step in our proof is a coupling of ``bad'' states in the algorithm (where the gain in terms of our selection criterion is small) with ``bad'' states in the optimum (where the cost incurred is high). This is reflected in the classification of the algorithm's states  as good/ok/bad (Definition~\ref{defn:type}) and the proof that the \emph{expected} gain of the algorithm is large (Lemma~\ref{lem:GLB}). Our algorithm and analysis for the adaptive routing problem (\ap) are   along similar lines.

\subsection{ Related Works} The basic submodular cover problem (select a min-cost subset of elements that covers a given submodular function) was first considered by \cite{W82} who proved that the natural greedy algorithm is a $(1+\ln \frac{1}{\epsilon})$-approximation algorithm. This result is tight because set cover is a special case.   The submodular cover problem corresponds to the special case of \ar with $m=1$. 

The  deterministic submodular ranking problem was introduced by \cite{AG11} who obtained an $O(\log \frac1\epsilon)$-approximation algorithm when all costs are unit. This is a  special case of \ar when there is no feedback (i.e. $G=\emptyset$) and costs are uniform; note that the optimal \ar solution  in this case is just a fixed sequence of elements. 
The result in \cite{AG11} was based on an interesting ``reweighted'' greedy algorithm: the second term in our selection criterion~\eqref{eq:alg} is similar to this. 
A different proof of the submodular ranking result, using a min-latency type analysis, was obtained in \cite{INZ12} which also implied an $O(\log \frac1\epsilon)$-approximation algorithm with non-uniform costs.  We also  use  a min-latency type analysis for \ar.

The first $O(\log m)$-approximation algorithm for optimal decision tree was obtained in \cite{GNR17}, which is known to be best-possible from \cite{CPRAM11}.  This result was extended to the equivalence class determination problem in \cite{CLS14}. 
Previous results based on a simple greedy ``splitting'' algorithm, had a logarithmic dependence on either costs or probabilities which can be exponential in $m$; see   \cite{KPB99}, \cite{D01}, \cite{AH12}, \cite{CPRAM11} and  \cite{GB09}.  The algorithms in \cite{GNR17} and \cite{CLS14} were significantly  more complex than what we obtain here as a special case of \ar. In particular these algorithms  proceeded in $O(\log m)$ phases,
each of which required solving an auxiliary subproblem that reduced the number of possible  scenarios by a constant factor. 
Using such a ``phase based'' algorithm and analysis for the general \ar problem only leads to an $O(\log m\cdot \log \frac1\epsilon)$-approximation ratio  because the subproblem to be solved in each phase is submodular ranking which only has an $O(\log \frac1\epsilon)$-approximation ratio.
Our work is based on a much simpler greedy-style algorithm and a new analysis, which leads to the $\mathcal{O}(\log m + \log 1/\epsilon)$ approximation ratio.

A different stochastic version of submodular ranking was considered in \cite{INZ12} where (i) the feedback was independent across elements and (ii) all the submodular functions needed to be covered. In contrast, \ar involves a correlated scenario-based distribution and only the submodular function of the ``realized'' scenario $i^*$ needs to be covered. Due to these differences, both the algorithm and analysis for \ar are different from \cite{INZ12}: our selection criterion~\eqref{eq:alg} involves an additional ``information gain'' term, and our analysis requires a lot more work in order to handle correlations.  We note that unlike \ar, the stochastic submodular ranking problem in \cite{INZ12} does not capture the optimal decision tree problem and its variants (equivalence class, decision region determination, etc).

For some previous special cases of \ar studied in \cite{GKR10}, \cite{BBS12} and \cite{JCKKBS14}, one could obtain approximation algorithms via the framework of ``adaptive submodularity'' introduced by  \cite{GK11}. However, this approach does not apply to the general \ar problem and  the approximation ratio obtained  is at least $\Omega(\log^2 1/p_{min})$ where $p_{min} = \min_{i=1}^m p_i$ 
can be exponentially small in $m$. We note that the original paper by \cite{GK11} claimed an $O(\log 1/p_{min})$ bound which was found to be erroneous by \cite{NS17}; an updated version in ~\cite{GK-arxiv} addresses this error but only obtains an  $O(\log^2 1/p_{min})$ bound. So even in the case of uniform probabilities, our result  provides an improved $O(\log m)$ approximation ratio compared to the $O(\log^2 m)$ ratio  from \cite{GK-arxiv}. We also note that our analysis is based on a completely different approach, which might be of independent interest.

Recently, \cite{GHKL16} considered the scenario submodular cover problem, which can also be seens as a special case of \ar.  This involves  a \emph{single} integer-valued submodular function for all scenarios which is  defined on an expanded groundset $U\times G$ (i.e. pairs of ``element, feedback'' values). For this problem, our algorithm matches (in fact, improves slightly) the approximation ratio in \cite{GHKL16} with a much simpler algorithm  and analysis.  We  note that \ar is strictly more general than  scenario submodular cover. For example, deterministic submodular ranking studied in \cite{AG11} is  a special case of \ar but not of scenario submodular cover.

A special case of  the adaptive routing problem (\ap), the \emph{adaptive TSP}  was studied in \cite{GNR17}, where the goal is to visit vertices in a random demand set. \cite{GNR17} obtained an $O(\log^2n \cdot \log m)$-approximation algorithm for adaptive TSP and showed that any improvement on this would translate to a similar improvement for the group Steiner problem, which is a long standing open question. 
While our approximation ratio   for \ap  is slightly worse, it is much more general and involves a simpler analysis. For example, using \ap we can directly obtain an  approximation algorithm for the variant of adaptive TSP where only a target number of demand vertices need to be visited.

A problem formulation similar to \ap was also studied in \cite{LimHL15} where approximation algorithms were obtained under  certain technical assumptions on the underlying submodular functions and probability distribution. To the best of our knowledge, the approach in \cite{LimHL15} is  not applicable to  the general \ap problem considered here.

\section{Algorithm for Adaptive Submodular Ranking}\label{sec:algo}

\def\nw{v}
\def\k{\mathbf{k}}
\def\st{\mathsf{Stem}_k(H)}

Recall that the  state of our algorithm (i.e. any node in its decision tree) can be represented  by $(E,H)$ where (i) $E\sse U$ is the set  of previously selected elements and (ii) $H\sse [m]$ is the set of scenarios that are compatible with feedback (on $E$) received so far and are still uncovered.

\paragraph{Intuitive explanation of the algorithm:} At each state $(E,H)$, our algorithm selects an element  that maximizes the value computed in Equation~\eqref{eq:alg}. This can be viewed as the cost-effectiveness of any element $e$: the terms inside the paranthesis measure the gain from element $e$ and this gain is   normalized by the element's cost $c_e$. The gain of any element $e$ comes from two sources:
\begin{enumerate}
\item \emph{Information gain:} this corresponds to the first term in \eqref{eq:alg}. Note that the feedback from element $e$ can be used to define a partition of the scenarios in $H$, where all scenarios in a part have the same feedback from $e$. Then,  subset $L_e(H)$ is defined to be the complement of the largest-cardinality part; note that each part within $L_e(H)$ has size at most $|H|/2$. If the realized scenario happens to be in $L_e(H)$ then we make good progress in identifying the scenario: this is because  the number of compatible scenarios decreases by (at least) a factor of two. The term $\sum_{j\in L_e(H)} p_j$ in \eqref{eq:alg} is just the probability that the realized scenario is in $L_e(H)$.
\item \emph{Function coverage:} this corresponds to the second term in \eqref{eq:alg} and is based on the algorithm for \emph{deterministic} submodular ranking from~\cite{AG11}. An important point here is that we consider the relative gain of each function $f_i$ (for  $i\in H$) which is the ratio of the increase in function value (i.e. $f_i(e\cup E) - f_i(E)$)  to the remaining target (i.e. $1- f_i(E)$), rather than just the absolute increase.

Algorithm~\ref{alg:asr} gives a  formal description. Note that we may not  incur the cost for all  selected elements under scenario $i^*$  as the cost is only considered up to the point when $i^*$ is covered.
\end{enumerate}

\begin{algorithm}
\begin{algorithmic}
\State $E \leftarrow \emptyset, \, H\leftarrow [m]$
 \While {$H \neq \emptyset$} 
\State For any element $e\in U$, let $B_e(H)$ denote the set with maximum cardinality  amongst  
$$\{i\in H: r_i(e)= t \},\quad  \mbox{ for }t\in G.$$
\State Define $L_e(H) = H\setminus B_e(H)$
\State Select element $e\in U\setminus E$ that maximizes:
\begin{equation} \label{eq:alg}
\frac{1}{c_e}\cdot \left( \sum_{j\in L_e(H)} p_j \,\,+\,\, \sum\limits_{i\in H} p_i \cdot \frac{f_i(e\cup E)-f_i(E)}{1-f_i(E)} \right).
\end{equation}
   
   \State $E\leftarrow E\cup \{e\}$
   \State Remove incompatible and covered scenarios from $H$ based on the feedback from $e$.\EndWhile
 
 \State Output $E$
 \caption{\ar algorithm \label{alg:asr}}
 
\end{algorithmic}
\end{algorithm}

Note that $H$ only contains uncovered scenarios. So, for all $i\in H$ we have  $f_i(E)<1$
and the denominator in equation~\eqref{eq:alg} is always positive. We have the following theorem:

\begin{theorem} \label{thm:apx}
Algorithm~\ref{alg:asr} is an $\mathcal{O}(\log{1/\epsilon}+\log{m})$-approximation algorithm for \ar.
\end{theorem}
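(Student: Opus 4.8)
The plan is to feed the recurrence of Lemma~\ref{ineq} into the weighted sum $\sum_{k\ge0}2^k a_k$ that controls $C_{ALG}$, obtaining a self-referential inequality that I can solve, and then match the leftover $x_k$-terms against the lower bound on $C_{OPT}$. To set this up, I would write $S:=\sum_{k\ge 0}2^k a_k$ and $X:=\sum_{k\ge1}2^{k-1}x_k$, so that the two bounds of~\eqref{eq:AlgCov} read $C_{ALG}\le L\,S+L$ and $C_{OPT}\ge \tfrac12 X$, i.e. $X\le 2\,C_{OPT}$.

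The main computation is to bound $S$. Splitting off the $k=0$ term (using $a_0\le 1$) and applying $a_k\le 0.2\,a_{k-1}+3x_k$ to every $k\ge1$ gives
\[
S \;\le\; 1 \;+\; 0.2\sum_{k\ge1}2^k a_{k-1}\;+\;3\sum_{k\ge1}2^k x_k.
\]
The crucial observation is that the two remaining sums re-index back onto $S$ and $X$ with an extra factor of two: shifting the index shows $\sum_{k\ge1}2^k a_{k-1}=2\sum_{j\ge0}2^j a_j=2S$ and $\sum_{k\ge1}2^k x_k=2X$. Substituting yields $S\le 1+0.4\,S+6X$, and solving this self-referential inequality gives $0.6\,S\le 1+6X$, hence $S\le \tfrac53+10X$. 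This is where the geometric weighting pays off: the $0.2$ contraction factor in the Lemma beats the factor $2$ introduced by the index shift, so the coefficient of $S$ on the right is $0.4<1$ and the recursion collapses to a closed-form bound.

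To finish, I would plug this back in: $C_{ALG}\le L\,S+L\le L\big(\tfrac53+10X\big)+L=\tfrac83 L+10LX$. Using $X\le 2\,C_{OPT}$ together with the trivial bound $C_{OPT}\ge 1$ (any scenario needs at least one element, of integer cost $\ge1$, before it can be covered), the additive $\tfrac83 L$ term is absorbed into $\tfrac83 L\,C_{OPT}$, giving $C_{ALG}=O(L)\cdot C_{OPT}$. Since $L=15(1+\ln\frac1\epsilon+\log_2 m)=O(\log\frac1\epsilon+\log m)$, the claimed approximation ratio follows.

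I expect essentially no obstacle at this level: given the Claim and Lemma~\ref{ineq}, the argument is pure bookkeeping, and the only point requiring a little care is the index-shift identity that turns the two tail sums into $2S$ and $2X$. All the genuine difficulty lies in proving Lemma~\ref{ineq} itself, i.e. establishing the per-phase contraction $a_k\le 0.2\,a_{k-1}+3x_k$ that relates the uncovered mass of \alg\ at geometric time scales to that of \opt; that is assumed here.
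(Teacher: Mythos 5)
Your proposal is correct and follows essentially the same route as the paper: both feed the recurrence $a_k\le 0.2\,a_{k-1}+3x_k$ of Lemma~\ref{ineq} into the weighted sum bounding $C_{ALG}$ from~\eqref{eq:AlgCov}, exploit the index shift (where the contraction factor $0.2$ beats the factor $2$, giving coefficient $0.4<1$), solve the resulting self-referential inequality, and absorb the additive $O(L)$ term using $C_{OPT}\ge 1$. The only difference is cosmetic: you solve for $S=\sum_{k\ge0}2^ka_k$ directly, while the paper manipulates $Q=LS+L$; the constants differ slightly but the argument is identical.
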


Now, we analyze the performance of this algorithm.   
For any subset $T\sse[m]$ of scenarios, we use $\Pr(T)=\sum_{i\in T} p_i$. 
Let OPT denote an optimal solution to the \ar instance and ALG be the solution found by  the above algorithm. Set $L := 15(1+\ln{1/\epsilon}+\log_2{m})$ and its choice will be clear later. 
We refer to the total cost incurred at any point in a solution as the \emph{time}.  We assume (without loss of generality, by scaling) that all costs are at least 1. For any $k=0,1,\cdots$,  we define the following quantities:
\begin{itemize}
\item $A_k$ 
is the set of uncovered scenarios of ALG at time $L\cdot2^k$, and $a_k=\Pr(A_k)$. More formally, we have $A_k = \{i \in [m]: C_{ALG}(i)\geq L\cdot 2^k\}$ where $C_{ALG}(i)$ is the cost of scenario $i$ in ALG.
\item $X_k$ 
is the set of uncovered scenarios of OPT at time $2^{k-1}$, and $x_k=\Pr(X_k)$. That is,   we have $X_k = \{i \in [m]: C_{OPT}(i)\geq 2^{k-1}\}$  where $C_{ALG}(i)$ is the cost of scenario $i$ in OPT.  Note that $x_0 = 1$.
\end{itemize}
To keep things simple, we will assume that all costs are integers. However, the  analysis extends directly  to the case of non-integer costs  by replacing summations (over time $t$) with integrals.
\begin{lemma} \label{covlemma}The expected cost of ALG and OPT can be bounded as follows.
\begin{equation} \label{eq:AlgCov}
C_{ALG} \leq L\sum\limits_{k\ge 0} {2^ka_k} + L \qquad \mbox{and} \qquad
C_{OPT} \geq \frac{1}{2}\sum\limits_{k\ge 0} {2^{k-1}x_k}
\end{equation}\end{lemma}
\begin{proof} In ALG, for all $k\geq 1$, the probability of scenarios for which the cover time is in $[2^{k-1}L,2^kL)$ is equal to $a_{k-1} - a_k$. So we have:
\begin{align}
C_{ALG}&=\sum_{i\in [m]}p_i\cdot C_{ALG}(i)=\sum_{k\geq 1}\sum_{i\in A_{k-1}\setminus A_{k}}p_i\cdot C_{ALG}(i) \quad  \leq \quad \sum_{k\geq 1}\sum_{i\in A_{k-1}\setminus A_{k}}p_i\cdot 2^kL \notag \\
&\leq \sum\limits_{k\geq1} 2^kL (a_{k-1}-a_k) + L(1-a_0) \quad = \quad \sum\limits_{k\geq1} 2^kL a_{k-1} - \sum\limits_{k\geq1} 2^kL a_k + L(1-a_0) \notag \\
&=2\sum\limits_{k\geq0} 2^kL a_{k} - (\sum\limits_{k\geq0} 2^kL a_{k} - La_0)+L(1-a_0)\quad = \quad \sum\limits_{k\geq0} 2^kL a_{k} + L \notag 
\end{align}
On the other hand, in OPT, for all $k\geq 1$, the probability of scenarios for which the cover time is in $[2^{k-2},2^{k-1})$ is equal to $x_{k-1} - x_k$. So we have:
\begin{align}
C_{OPT}&=\sum_{i\in [m]}p_i\cdot C_{OPT}(i) \quad =\quad\sum_{k\geq 1}\sum_{i\in X_{k-1}\setminus X_{k}}p_i\cdot C_{OPT}(i)\quad \geq\quad \sum_{k\geq 1}\sum_{i\in X_{k-1}\setminus X_{k}}p_i\cdot 2^{k-2} \notag \\
&\geq \sum\limits_{k\geq 1} 2^{k-2} (x_{k-1}-x_k) \quad = \quad \sum\limits_{k\geq 1} 2^{k-2} x_{k-1} -  \sum\limits_{k\geq 1} 2^{k-2} x_{k}\notag \\
&= \sum\limits_{k\geq 0} 2^{k-1} x_{k} - \frac{1}{2}( \sum\limits_{k\geq 0} 2^{k-1} x_{k} - \frac12) \quad \geq \quad \frac{1}{2} \sum\limits_{k\geq 0} 2^{k-1} x_{k} \notag
\end{align}
Above  we use the fact that $x_0 = 1$.  
\end{proof}

Thus, if we 
upper bound each $a_k$ by some multiple of $x_k$, 
it would be easy to 
obtain the approximation factor. However, this is not the case and instead we will prove:
\begin{lemma} \label{lem:main} For all $k\ge 1$ we have $a_k \leq 0.2a_{k-1} + 3x_k$. \end{lemma}

Using this lemma we can prove Theorem~\ref{thm:apx}:

\begin{proof} Let $Q = \sum\limits_{k\ge 0} {L\cdot 2^k a_k} + L$, which is the bound on $C_{ALG}$ from \eqref{eq:AlgCov}. Using Lemma \ref{lem:main}:
\begin{align}
Q &\,\, \leq \,\, L\cdot \sum\limits_{k\ge 1} { 2^k (0.2a_{k-1}+3x_k) } \,\, + \,\, L(a_0+1)
 \,\, \leq \,\, 0.4L\cdot \sum\limits_{k\ge 0} {2^{k} a_{k}} \,\,+\,\, 6L\cdot \sum\limits_{k\ge 1} {2^{k-1}x_k} \,\,+\,\, L(a_0+1) \notag\\
&\,\,\leq \,\, 0.4(Q-L)\,\,+\,\, 6L \left(  \sum\limits_{k\ge 0} {2^{k-1}x_k} \, - \, \frac{x_0}{2} \right) \,\,+\,\, 2L
 \,\,\,\le\,\,\,  0.4\cdot Q + 12L\cdot C_{OPT} \label{eq:thm-1}
 \end{align} 
 
The first inequality in~\eqref{eq:thm-1} is by definition of $Q$ and  $a_0\le 1$, and the second inequality uses the bound on $C_{OPT}$ from \eqref{eq:AlgCov}. Finally, 
we have $Q\le 20L\cdot C_{OPT}$.   Since $L= 15(1+\ln{1/\epsilon}+\log{m})$ and $C_{ALG} \leq Q$, we obtain the theorem. 
\end{proof}

\subsection{Proof of Lemma~\ref{lem:main}}

We now prove Lemma~\ref{lem:main} for a fixed $k\ge 1$.  Consider any time $t$ between $L\cdot 2^{k-1}$ and $L\cdot2^k$. Note that ALG's decision tree induces a partition of all the uncovered  scenarios at time $t$, where each part $H$ consists of all scenarios that are at a particular state $(E,H)$ at time $t$. Let $R(t)$ denote the set of parts in this partition. We also use $R(t)$ to denote  the collection of states corresponding to these parts. Note that all scenarios in $A_k$  appear in $R(t)$ as these scenarios are uncovered even at time $L\cdot 2^k\ge t$. Similarly, all scenarios in $R(t)$ are 
in $A_{k-1}$. See Figure~\ref{fig:GBO}.

We have the following proposition:
\begin{proposition}\label{yeselements} Consider any state $(E,H)$ and element $e \in E$. Then (i) the feedback values $\{r_i(e): i \in H\}$ are all identical, and (ii) $L_e(H) = \emptyset$. \end{proposition}
\begin{proof}
Note that by definition, at state $(E,H)$ all scenarios in $H$ are compatible with the feedback we have received from elements in $E$. Thus, all of them should have the same feedback for any element in $E$. Furthermore, for any $e\in E$, $L_e(H)$ is the complement of the  largest part in the  partition of $H$ based on element $e$'s feedback. According to the fact that all scenarios in $H$ have the same feedback for element $e$, they are all in the same part, which is the largest part. So the complement of the largest part of the partition which is $L_e(H)$ is empty. 
\end{proof}

For any $(E,H) \in R(t)$, note that $E$ consists of all elements that have been completely selected before time $t$. The element that is being selected at state $(E,H)$  is \emph{not} included in $E$. 
Let $T_H(k)$ denote the subtree of OPT that corresponds to paths traced by scenarios in $H$ up to time $2^{k-1}$; this only includes elements that are completely selected by time $2^{k-1}$.  
Note that each node (labeled by any element $e\in U$) in $T_H(k)$ has at most $|G|$ outgoing branches and  one of them 
is labeled by  the feedback corresponding to $B_e(H)=H\setminus L_e(H)$. We  define  $\st$ to be the path in $T_H(k)$ that at each node (labeled $e$) follows the branch  corresponding to $H\setminus L_e(H)$.  See Figure $\ref{fig:stem}$ for an example. 
We also use $\st$ to denote the  set of elements that are completely selected on this path.

\begin{figure}
 \centering
\begin{minipage}{0.5\textwidth}
  \centering
 \includegraphics[scale=1.1]{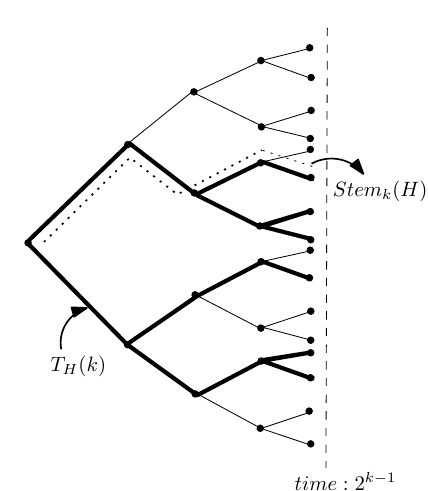}
 \caption{$\st$ in OPT for $|G|=2$}
 \label{fig:stem}
\end{minipage}%
\begin{minipage}{0.5\textwidth}
 \centering
 \includegraphics[scale=1]{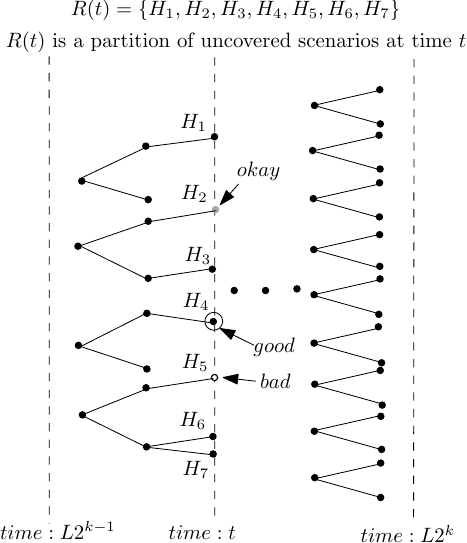}
 \caption{Bad, good and okay states in ALG}
 \label{fig:GBO}
 \end{minipage}
\end{figure}

\begin{definition}\label{defn:type} Each state $(E,H)$ in ALG is exactly one of the following types:
\begin{itemize}
\item {\bf bad} if the probability of  uncovered scenarios in $H$ at the end of $\st$ is at least $\frac{\Pr(H)}{3}$.
\item {\bf okay} if it is not bad and 
$\Pr(\cup_{e\in \st} \, L_e(H))$  is at least $\frac{\Pr(H)}{3}$.
\item {\bf good} if it is neither bad nor okay and the probability of scenarios in $H$ that get covered by $\st$  is at least $\frac{\Pr(H)}{3}$.
\end{itemize}
\end{definition}

See Figure $\ref{fig:GBO}$. This is well defined, because by definition of $\st$ each scenario in $H$ is {\bf (i)} uncovered at the end of $\st$, or {\bf (ii)} in $L_e(H)$ for some $e\in \st$, or {\bf (iii)} covered by some prefix of $\st$, i.e. the function value reaches $1$ on $\st$. So the total probability of the scenarios in one of these $3$ categories must be at  least $\frac{\Pr(H)}{3}$. Therefore each state $(E,H)$ is exactly one of these three types. 

The following quantity turns out to be very useful in our proof  of Lemma~\ref{lem:main}.
   \begin{align}\label{eq:gain}
Z  &:= \sum\limits_{t>L2^{k-1}}^{L2^k}\enskip\sum\limits_{(E,H)\in R(t)} \,\, \max_{e\in U\setminus E}\,\, \frac{1}{c_e}\cdot \left( \Pr(L_e(H)) \,+\, \sum\limits_{i\in H} p_i\cdot \frac{f_i(e\cup E)-f_i(E)}{1-f_i(E)} \right)\\
&=\sum\limits_{t>L2^{k-1}}^{L2^k}\enskip\sum\limits_{(E,H)\in R(t)} \,\, \max_{e\in U\setminus E}\,\, \frac{1}{c_e}\cdot \left( \sum\limits_{i\in H} p_i\cdot\left(\mathbb{1}[i\in L_e] + \frac{f_i(e\cup E)-f_i(E)}{1-f_i(E)}\right) \right)\label{eq:gain-2}
\end{align} 
Basically $Z$ corresponds to  the total ``gain'' according to our algorithm's selection criterion~\eqref{eq:alg} accrued from time $L2^{k-1}$ to $L2^k$, over all the decision paths. We note that if costs are not integer, we can  consider an  integral over time $t\in (L2^{k-1},L2^k]$  instead of the summation, and the rest of the analysis is essentially unchanged. 
Now, we obtain a lower and upper bound for $Z$ and combine them to prove Lemma $\ref{lem:main}$. The lower bound views $Z$ as a sum of terms over $t$, and uses the fact  that the gain is ``high" for good/ok states as well as the bound on probability of bad states   (Proposition \ref{obs:bad-prob1}). The upper bound  views $Z$ as a sum of terms over scenarios and uses the fact that if the total gain for a scenario is ``high" then it must be already covered.

\begin{proposition}\label{obs:bad-prob1}
For any 
$t\in [(L2^{k-1},L2^{k}]$, we have $\sum_{\substack{(E,H)\in R(t)\\(E,H):bad}}\Pr(H) \,\leq\,3x_k$.
\end{proposition}
\begin{proof}Note that $Stem_k(H) \subseteq T_H(k)$. Recall that $T_H(k)$ was the subtree of OPT up to time $2^{k-1}$ that only contains the scenarios in $H$. So, the probability of uncovered scenarios at the end of $Stem_k(H)$ is at most the probability of scenarios in $H$ that are not covered in $OPT$ by time $2^{k-1}$. This probability is at least $\Pr(H)/3$ for the bad state $(E,H)$ based on its definition. Now, since states in $R(t)$ induce a subpartition of scenarios,  we have
$$x_k \,\, = \,\,  \sum_{i\in X_k} p_i  \,\,  \ge \,\,  \sum\limits_{\substack{(E,H)\in R(t)\\ (E,H):bad}} \,\, \sum_{i\in H\cap X_k} p_i  \,\, \ge \,\,  \sum\limits_{\substack{(E,H)\in R(t)\\ (E,H):bad}} \Pr(H)/3.$$
Rearranging, we obtain the desired inequality.
 
\end{proof}

\begin{lemma}\label{lem:GLB} We have $Z\geq L\cdot(a_k - 3x_k)/3$. \end{lemma}
\begin{proof}Considering only the good/okay states in each $R(t)$ in the expression~\eqref{eq:gain}:
\begin{align*}
 {Z  \geq  \sum\limits_{t>L2^{k-1}}^{L2^k} \left(\sum\limits_{\substack{(E,H)\in R(t)\\ (E,H):okay}} \max_{e\in U\setminus E}\frac{\Pr(L_e(H))}{c_e}
\,\,+\,\, \sum\limits_{\substack{(E,H)\in R(t)\\ (E,H):good}} \max_{e\in U\setminus E}\, \sum\limits_{i\in H} \frac{p_i}{c_e}\cdot \frac{f_i(e\cup E)-f_i(E)}{1-f_i(E)}\right)}
\end{align*}

Fix any time $t$. For any state $(E,H)\in R(t)$
define $W(H) = \st\setminus E$. 
The total cost of elements in $\st$ is at most $2^{k-1}$; so $c(W(H))\le 2^{k-1}$. 

{\bf Case 1.} $(E,H)$ is an  okay state. Since $W(H)\subseteq  U\setminus E$ we can write:
{  \begin{align}
 &\max_{e\in U\setminus E} \frac{\Pr(L_e(H))}{c_e}  \,\,\hspace{3mm} \geq \hspace{3mm}\,\, \max_{e\in W(H)} \frac{\Pr(L_e(H))}{c_e} \,\, \hspace{3mm}\geq \hspace{3mm} \,\, \frac{\sum\limits_{e\in W(H)}\Pr(L_e(H))}{c(W(H))} \notag \\
 &\geq \hspace{3mm} \,\, \frac{\Pr(\cup_{e\in W(H)}L_e(H))}{2^{k-1}} \,\, \hspace{3mm}= \hspace{3mm}\,\, \frac{1}{2^{k-1}} \cdot \Pr(\cup_{e\in \st}L_e(H)) \,\, \hspace{3mm}\ge \hspace{3mm}\,\, \frac{\Pr(H)}{3\cdot 2^{k-1}} \label{eq:okay-gain}
\end{align}}
The equality in~\eqref{eq:okay-gain} uses $\cup_{e\in E} L_e(H)=\emptyset$ (by Proposition~\ref{yeselements}), and the last inequality is by definition of an okay state.

{\bf Case 2.} $(E,H)$ is a good state. Below, we use $F\sse H$ to denote the set of scenarios that get covered in $\st$; by definition of a good state, we have $\Pr(F)\ge \Pr(H)/3$. Again using $W(H)\subseteq  U\setminus E$, we have:
 {\small   \begin{align}
&\max_{e\in U\setminus E} \frac{1}{c_e} {\sum\limits_{i\in H}  p_i \cdot \frac{f_i(e\cup E)-f_i(E)}{1-f_i(E)}
 \geq       \max_{e\in W(H)} \frac{1}{c_e}\sum\limits_{i\in H}   p_i \cdot \frac{f_i(e\cup E)-f_i(E)}{1-f_i(E)}} \notag \\
 &\geq  \frac{1}{c(W(H))} \sum\limits_{e\in W(H)} \sum\limits_{i\in H}  p_i \cdot  \frac{f_i(e\cup E)-f_i(E)}{1-f_i(E)}   \notag \\
\label{eq:good-gain-1}&=  \hspace{3mm} {   \frac{1}{c(W(H))} \sum\limits_{i\in H}  p_i {\sum\limits_{e\in W(H)} \frac{f_i(e\cup E)-f_i(E)}{1-f_i(E)} } 
\, \hspace{3mm}\geq \hspace{3mm}  \,\,  \frac{1}{2^{k-1}} \sum\limits_{i\in H}  p_i \cdot\frac{f_i(W(H)\cup E)-f_i(E)}{1-f_i(E)}  }  \\
\label{eq:good-gain-2}&=  \,\,   \frac{1}{2^{k-1}} {\sum\limits_{i\in H} p_i  \cdot\frac{f_i(\st)-f_i(E)}{1-f_i(E)} }\hspace{3mm}
\ge \hspace{3mm}{   \,\, {\sum\limits_{i\in F} \frac{p_i}{2^{k-1}}}  \,\, \hspace{3mm}=\hspace{3mm}  \,\, \frac{\Pr(F)}{2^{k-1}}  \,\,\hspace{3mm} \ge \hspace{3mm} \,\, \frac{\Pr(H)}{3\cdot 2^{k-1}}}
\end{align} }
The last inequality in~\eqref{eq:good-gain-1} is by submodularity of the $f_i$s, and the next equality is by definition of $W(H)$. The first inequality in~\eqref{eq:good-gain-2} is based on this fact that $f_i(Stem_k(H))=1$ for any $i\in F$ 
and the  last inequality is by definition of a good state. Now, we combine
~\eqref{eq:okay-gain} and
~\eqref{eq:good-gain-2}:
\begin{align}
&Z\hspace{3mm} \geq\hspace{3mm}  \sum\limits_{t>L2^{k-1}}^{L2^k}\sum\limits_{\substack{(E,H)\in R(t)\\ (E,H):okay}} \frac{\Pr(H)}{3\cdot2^{k-1}}\hspace{3mm}+\hspace{3mm} \sum\limits_{t>L2^{k-1}}^{L2^k}\sum\limits_{\substack{(E,H)\in R(t)\\ (E,H):good}} \frac{\Pr(H)}{3\cdot2^{k-1}} \notag \\
&= \hspace{3mm} \sum\limits_{t>L2^{k-1}}^{L2^k}\frac{\Pr(R(t))-\sum_{\substack{(E,H)\in R(t)\\ (E,H):bad}}\Pr(H)}{3\cdot2^{k-1}} 
\hspace{3mm}\geq \hspace{3mm} \sum\limits_{t>L2^{k-1}}^{L2^k}\frac{a_k-3x_k}{3\cdot2^{k-1}} \,\,=\,\,\frac{L\cdot(a_k-3x_k)}{3} 
\end{align}
The first equality uses the fact that the states $(E,H)\in R(t)$ are exactly one of the types bad/okay/good.  The last inequality uses Proposition~\ref{obs:bad-prob1} and that $\{H: (E,H)\in R(t)\}$ contains all scenarios in $A_k$.    \end{proof}

\begin{lemma}\label{GUB} We have $Z\leq a_{k-1}\cdot(1+\ln{1/\epsilon}+\log{m})$.
\end{lemma}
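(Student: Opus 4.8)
The plan is to read $G$ as the algorithm's total marginal gain, collected selection-by-selection over the time window $(L2^{k-1},L2^k]$, and then to bound the two terms of the score \eqref{eq:alg} separately. The key observation is that the inner $\max_{e\in U\setminus E}$ in \eqref{eq:gain} is attained by the element $e^*$ that ALG actually selects at the node $(E,H)$, and that while $e^*$ is being displayed (for $c_{e^*}$ consecutive integer time units) the state $(E,H)$ is frozen, so the part $H$ recurs in $R(t)$ for each of those time units with the same value $\frac{1}{c_{e^*}}(\cdots)$. Summing this rate over the $c_{e^*}$ time units cancels the $1/c_{e^*}$ factor and leaves exactly the marginal gain of $e^*$; since the window boundary can only truncate a display (counting a fraction of its gain), $G$ is at most $\sum_{(E,H,e^*)}\big(\Pr(L_{e^*}(H))+\sum_{i\in H}p_i\,\frac{f_i(E\cup e^*)-f_i(E)}{1-f_i(E)}\big)$, the sum ranging over the node-selections met in the window. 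I would split this as $G\le G_1+G_2$, where $G_1$ collects the $\Pr(L_{e^*}(H))$ contributions and $G_2$ the submodular contributions.

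For $G_1$ I would charge each term to the scenarios on the minority side of a split. Fix a scenario $j$ and follow its root–leaf path in ALG. Whenever $j\in L_e(H)$ at a node $(E,H)$, by \eqref{eq:le} the branch that $j$ follows carries at most $|H|/2$ scenarios, so the population sharing $j$'s node at least halves; starting from at most $m$ scenarios this can happen at most $\log_2 m$ times along the whole path. Thus $j$ contributes $p_j$ to $G_1$ at most $\log_2 m$ times. Moreover every scenario appearing at a node of the window is uncovered at that (post-$L2^{k-1}$) time, hence uncovered at time $L2^{k-1}$, i.e. it lies in $A_{k-1}$. Reordering the sum over scenarios gives $G_1\le\sum_{j\in A_{k-1}}p_j\log_2 m=a_{k-1}\log_2 m$.

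For $G_2$ I would use the standard logarithmic telescoping. For a single scenario $i$, only steps with $e^*\in S_i$ change $f_i$, and for each such step with $f_i(E\cup e^*)<1$ the elementary bound $\ln x\le x-1$ (applied to $x=\frac{1-f_i(E\cup e^*)}{1-f_i(E)}$) yields $\frac{f_i(E\cup e^*)-f_i(E)}{1-f_i(E)}\le\ln\frac{1-f_i(E)}{1-f_i(E\cup e^*)}$. Summing these along $i$'s path telescopes to $\ln\frac{1-f_i(\text{first})}{1-f_i(\text{last})}\le\ln\frac1\epsilon$, using $f_i(\text{first})\ge0$ and the fact that any value of $f_i$ strictly below $1$ is at most $1-\epsilon$ by the definition \eqref{eq:epsilon-defn} of $\epsilon$. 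The single step (if any) that drives $f_i$ to exactly $1$, which covers $i$ and is excluded from the telescoping, is bounded directly by $\frac{1-f_i(E)}{1-f_i(E)}=1$. Hence each scenario contributes at most $p_i(1+\ln\frac1\epsilon)$, and summing over the scenarios appearing in the window (again a subset of $A_{k-1}$) gives $G_2\le a_{k-1}(1+\ln\frac1\epsilon)$.

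Adding the two estimates yields $G\le a_{k-1}(1+\ln\frac1\epsilon+\log_2 m)$, which is the lemma. I expect the main obstacle to be the bookkeeping of the first paragraph: verifying that the $\max$ in \eqref{eq:gain} is realized by ALG's actual choice, that summing the per-time-unit rate over a frozen display exactly reconstructs the marginal gain, and that the window-boundary truncation together with the restriction of the relevant scenarios to $A_{k-1}$ are handled cleanly. By contrast, the $G_2$ telescoping is the familiar submodular-ranking estimate and the $G_1$ balanced-splitting count is the familiar decision-tree estimate; the real content of the lemma is that the single score \eqref{eq:alg} lets both run simultaneously on the one quantity $G$.
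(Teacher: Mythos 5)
Your proof is correct and follows essentially the same route as the paper's: both exploit that the $\max$ in \eqref{eq:gain} is attained by the element ALG actually displays (so summing the rate over its display time cancels the $1/c_e$), reorganize the sum over scenarios in $A_{k-1}$, bound the submodular term by $1+\ln(1/\epsilon)$ via the telescoping estimate, and bound the splitting term by $\log_2 m$ via the halving argument. The only cosmetic difference is that you prove the telescoping bound inline, whereas the paper invokes it as Claim~2.1 of \cite{AG11}.
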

\begin{proof} For any scenario $i\in A_{k-1}$ (i.e. uncovered in ALG by time $L2^{k-1}$) let $\widehat{\pi_i}$ be the path traced by $i$ in ALG's decision tree. For each element $e$ that appears in $\widehat\pi_i$, we say that $e$ is selected during the interval $K_{e,i} = (D_e, D_e+c_e]$ where $D_e$ is the total cost of elements in $\widehat\pi_i$ before $e$. 
Let $\pi_i$ be the sub-path of $\widehat{\pi_i}$ consisting of elements selected between  time $2^{k-1}L$ and  $2^kL$ or when $i$ gets covered  (whatever happens earlier).
Let $t_{e,i}\le c_e$ denote the width of the interval $K_{e,i}\cap (L2^{k-1}\,,\, L2^k]$. Note that there can be at most two elements $e$ in $\pi_i$ with $t_{i,e}<c_e$: one that is being selected at time  $L2^{k-1}$ and another at $L2^{k}$.

Recall that for any $L2^{k-1}<t\le L2^k$, every scenario in $R(t)$ appears in $A_{k-1}$. So only scenarios in $A_{k-1}$ can contribute to $Z$ and we  rewrite~\eqref{eq:gain-2} by interchanging summations as follows:
{  \begin{align}
Z  \hspace{3mm}  \,\,&=\hspace{3mm}\,\, \sum\limits_{i\in A_{k-1}} p_i \cdot \enskip\sum\limits_{e \in \pi_i}t_{e,i} \cdot \frac{1}{c_e}\left(\frac{f_i(e\cup E)-f_i(E)}{1-f_i(E)}+\mathbb{1}[i\in L_e(H)]\right) \notag \\
&\,\,\le\,\, \hspace{3mm} \sum\limits_{i\in A_{k-1}} p_i \cdot \enskip \left(\sum\limits_{e \in \pi_i} \frac{f_i(e\cup E)-f_i(E)}{1-f_i(E)} \,\,+\,\, \sum\limits_{e \in \pi_i} \mathbb{1}[i\in L_e(H)]\right)  \label{eq:gain-UB}
\end{align}}
Above, for any $e\in \pi_i$ we use $(E,H)$ to denote the state at which $e$ is selected. 

Fix any scenario $i\in A_{k-1}$. For the first term, we use Claim~\ref{cl:AG-eps} below and   the definition of $\epsilon$ in~\eqref{eq:epsilon-defn}. This implies
$\sum_{e \in \pi_i}\frac{f_i(e\cup E)-f_i(E)}{1-f_i(E)}  \,\, \leq  \,\,1 + \ln{\frac{1}{\epsilon}}$.
To bound the second term, note that  if scenario $i\in L_e(H)$ when ALG selects element $e$,  then  the number of compatible scenarios decreases by at least a factor of \emph{two} in  path $\pi_i$. So such an  event can happen at most $\log_2 m$ times along the path $\pi_i$. Thus we can write $\sum\limits_{e \in \pi_i}\mathbb{1}[i \in L_e(H)] \,\,\leq \,\,\log_2{m}$.
The lemma now  follows from~\eqref{eq:gain-UB}.  
\end{proof}

We now complete the proof of Lemma~\ref{lem:main}. 
\begin{proof}
By Lemma \ref{lem:GLB} and Lemma \ref{GUB} we have:
\begin{align*} L\cdot(a_k - 3x_k)/3 \quad \leq \quad Z \quad \leq \quad a_{k-1}\cdot(1+ \ln{1/\epsilon}+\log{m})= \quad a_{k-1}\cdot \frac{L}{15}\end{align*}
Rearranging, we obtain $a_k\le 0.2\cdot a_{k-1}+3x_k$ as needed. 
\end{proof}

\begin{claim}[Claim 2.1 in \cite{AG11}] \label{cl:AG-eps}
Let  $f:2^U\rightarrow [0,1]$ be  any monotone function with $f(\emptyset)=0$ and $\epsilon=\min\{ f(S\cup \{e\}) - f(S) \,: \, e\in U, S\sse U, f(S\cup \{e\}) - f(S) >0 \}$. Then, for any sequence $\emptyset = S_0\sse S_1\sse \cdots S_k\sse U$  of subsets, we have
$$\sum_{t=1}^k \frac{f(S_t)-f(S_{t-1})}{1-f(S_{t-1})}\,\,\le \,\, 1+\ln\frac{1}{\epsilon}.$$
\end{claim}

\section{Algorithm for Adaptive Submodular  Routing}

Recall that  the adaptive submodular routing problem (\ap) is a generalization of \ar to a vehicle-routing setting where costs correspond to a metric $(U\cup\{s\},d)$. Here, $U$ denotes the set of elements and $s$ is a special root vertex. The rest of the input is exactly as in \ar: we are given $m$ scenarios where  each scenario $i\in[m]$ has  some probability $p_i$, a submodular function $f_i$ and a feedback function $r_i$.  The goal is to compute an adaptive tour (that begins and ends at $s$) and covers a random scenario $i^*$ at minimum expected cost, where the cost corresponds to the cost of the path of elements we need to take until we cover the realized scenario.  For any walk $P$, when it is clear from  context, we will also use $P$ to refer to the vertices/elements on  this walk.

An important subproblem in our algorithm for \ap is the \emph{submodular orienteering problem} (SOP), defined as follows. There is  a metric $(U\cup\{s\},d)$ with root $s$, a monotone submodular function $f:2^V\rightarrow \mathbb{R}_+$ and a bound $B$.  The goal is to compute a tour $P$ originating from $s$ of cost at most $B$   that  maximizes $f(P)$. A $(\rho, \sigma)$-bicriteria approximation algorithm for SOP returns a tour $P$ such that cost of $P$ is at most $\sigma \cdot B$ and $f(P)\geq \opt/\rho$, where $\opt$ is the maximum value of a tour of cost at most  $B$. Our \ap algorithm will make  use of
 a $(\rho, \sigma)$-bicriteria approximation algorithm  denoted  ALG-SOP.   

\paragraph{Intuitive explanation of the algorithm:} Our algorithm involves concatenating a sequence of smaller tours (each   originating  from $s$)  where the tour costs  increase geometrically. Each such tour is obtained as a solution to a suitably defined instance of SOP.  The SOP instance encountered at  state $(E,H)$ involves  the function $g_{(E,H)}$ defined in \eqref{eq:g-sop}. Similar to the definition~\eqref{eq:alg} of  ``gain'' of an individual element in the \ar algorithm,    function $g_{(E,H)}(T)$  measures the collective gain from any subset $T$ of elements.  This 
again comprises of two parts:
\begin{enumerate}
\item \emph{Information gain:} this is the first term in \eqref{eq:g-sop}. The definition of subsets $L_e(H)$ is the same as for \ar. 
If the realized scenario happens to be in $L_e(H)$ \emph{for any} $e\in T$ then it is clear that we make good progress in identifying the scenario:  the number of compatible scenarios decreases by (at least) a factor of two. The term $\Pr\left(\cup_{e\in T}L_e(H)\right) $ in \eqref{eq:g-sop} is just the probability that the realized scenario is in $L_e(H)$ for some $e\in T$.
\item \emph{Function coverage:} this is the second term in \eqref{eq:g-sop} and is based on the algorithm for \emph{deterministic} submodular routing from~\cite{INZ12}. 
\end{enumerate}
Crucially, both of these terms in $g_{(E,H)}$ are monotone submodular functions: so SOP can be used.

\begin{algorithm}
\begin{algorithmic}
\State $E \leftarrow \emptyset, \pi \leftarrow \varnothing, H\leftarrow [m]$ and  $D = 15\rho(1+\ln\frac{1}{\epsilon}+\log m)$.
\For{phase $k = 0, 1, 2, ...$}
\State If $H = \emptyset$ then output $\pi$ and end the algorithm.
\For{iteration $u = 1, 2, ... ,  D$}
\State For any element $e\in U\setminus E$,
let $B_e(H)$ denote the set with maximum cardinality  amongst  \State \hspace{1cm}$\{i\in H: r_i(e)= t \}$ for $t\in G$; and define $L_e(H) := H\setminus B_e(H)$
\State Define the submodular function  
\begin{equation}\label{eq:g-sop}
 g_{(E,H)}(T) \,:=\, \Pr\left(\cup_{e\in T}L_e(H)\right)  \,+\, \sum_{i \in H}p_i\cdot\frac{f_i(E\cup T)-f_i(E)}{1-f_i(E)},\quad \forall\, T\sse U\end{equation}
\State Use ALG-SOP to approximately solve the SOP instance on metric $(U\cup\{s\}, d)$ with 
\State \hspace{1cm}root $s$, submodular function $g_{(E,H)}$ and cost bound $2^k$ to obtain tour $P_u$. \label{line-ap-alg-sop}  
   
\State   $E\leftarrow E\cup P_u$ and concatenate $P_u$ to $\pi$ to form a new tour
\State   Remove incompatible and covered scenarios from $H$ based on the feedback  from $P_u$
  
  \EndFor
  \EndFor
   \end{algorithmic}
 \caption{\ap algorithm \label{alg-ap} }
\end{algorithm}

As with \ar, the algorithm for \ap may not incur the cost of the entire walk traced under scenario $i^*$: recall that the cost is only incurred until $i^*$ gets covered. 

We can always assume that $P_u\sse U\setminus E$ in Line~\ref{line-ap-alg-sop}: this is because  $g_{(E,H)}(e)=0$ for all $e\in E$ as in Proposition~\ref{yeselements}. 
In the rest of this section, we will prove the following result.
\begin{theorem} \label{thm:pathapx}
If ALG-SOP is any $(\rho, \sigma)$-bicriteria approximation algorithm for SOP, our algorithm for \ap is an $\mathcal{O}(\sigma\rho(\log{1/\epsilon}+\log{m}))$-approximation algorithm.
\end{theorem}

We can  use the following known result on SOP.
\begin{theorem}\cite{calinescu2005polymatroid} \label{thm:sop} For any constant $\delta > 0$ there is a polynomial time $(\mathcal{O}(1), \mathcal{O}(\log^{2+\delta} n))$-bicriteria
approximation algorithm for the Submodular Orienteering problem.
\end{theorem}

By combining  Theorems~\ref{thm:sop} and~\ref{thm:pathapx} we obtain:

\begin{corollary} \label{finalapx}For any constant $\delta > 0$, there is an $\mathcal{O}((\log{1/\epsilon}+\log{m}) \cdot {\log}^{2+\delta}n)$-approximation algorithm for the adaptive submodular routing problem.
\end{corollary}

Instead of Theorem~\ref{thm:sop}, we can also use the \emph{quasi-polynomial} time $\mathcal{O}(\log n)$-approximation algorithm for SOP from \cite{CP05}, which implies: 
\begin{corollary} \label{finalapx2} There is a quasipolynomial time $\mathcal{O}((\log{1/\epsilon}+\log{m})\cdot \log n)$-approximation algorithm for \ap.
\end{corollary}

\subsection{Analysis}

We start by  showing  that the use of SOP is well-defined.
\begin{proposition} For any state $(E,H)$ in Algorithm~\ref{alg-ap}, the function $g_{(E,H)}$ is monotone and submodular.\label{subm}\end{proposition}
\begin{proof}
First note that for any monotone submodular function $f_i$ and $E\sse U$, we have $f_i(E\cup T) - f_i(E)$ is   a monotone submodular function of $T$.  Also $f(T) = \Pr(\bigcup_{e\in T} L_e(H))$ is a weighted coverage function, so it  is monotone submodular. Now, since a weighted sum of submodular functions is submodular, the following function is submodular:
$$\sum_{i \in H}p_i\cdot\frac{f_i(E\cup T)-f_i(E)}{1-f_i(E)} + \Pr(\bigcup_{e\in T}L_e(H))$$
which is equal to $g_{(E,H)}(T)$. 
\end{proof}

In the following, we use cost and time interchangeably. We will refer to the outer-loop in Algorithm~\ref{alg-ap} by \emph{phase} and the inner-loop by \emph{iteration}. Define $\bar{L} := 2D \cdot \sigma$. Then we have the following proposition:

\begin{proposition} All vertices that are added to $E$ in the $j$-th phase are visited in $\pi$ by time $\bar{L} \cdot 2^j$.\label{latency}\end{proposition}
\begin{proof}
In each phase $k$, we   add  $D$ tours of cost at most $2^k\sigma$ each. So a vertex that is added in phase $j$ is visited by time $\sum_{k=0}^j 2^{k}D\cdot\sigma \leq 2^{j+1} D\cdot\sigma = \bar{L}\cdot 2^j$. 
\end{proof}

Let ALG be the solution produced by Algorithm~\ref{alg-ap} and OPT be the optimal solution. For any $k=0,1,\cdots$,  we define the following quantities:
\begin{itemize}
\item $A_k$
is the set of uncovered scenarios of ALG at the end of phase $k$, and $a_k=\Pr(A_k)$.
\item $X_k$
is the set of uncovered scenarios of OPT at time $2^{k-1}$, and $x_k=\Pr(X_k)$. Note that $x_0 = 1$.
\end{itemize}
\begin{lemma} \label{pathcovlemma}The expected cost of ALG and OPT can be bounded as follows.
\begin{equation} \label{eq:pathAlgCov}
C_{ALG} \leq \bar{L}\sum\limits_{k\ge 0} {2^ka_k} + \bar{L} \qquad \mbox{and} \qquad
C_{OPT} \geq \frac{1}{2}\sum\limits_{k\ge 0} {2^{k-1}x_k}
\end{equation}\end{lemma}
\begin{proof} By Proposition~\ref{latency}, for all $k\geq 1$ every scenario in $A_{k-1}\setminus A_k$ in ALG is  covered by time    $\bar{L}2^k$. So we can write exactly the  same inequalities as in the proof of Lemma~\ref{covlemma}.
 
\end{proof}

As for \ar, in order to prove Theorem~\ref{thm:pathapx}, it suffices to prove:
\begin{lemma}\label{lem:pathmain} For any $k\geq 0$, we have $a_k \leq 0.2 a_{k-1} + 3x_k$. \end{lemma}

\subsection{Proof of Lemma~\ref{lem:pathmain}}

Throughout this subsection we fix phase $k$ to its value in Lemma~\ref{lem:pathmain}. Consider any iteration $u$ in phase $k$ of the algorithm. ALG's decision tree induces a partition of all the uncovered  scenarios at iteration $u$, where each part $H$ consists of all scenarios that are at a particular state $(E,H)$ at the start of iteration $u$. Let $R_k(u)$ denote the set of parts in this partition.  We also use $R_k(u)$ to denote  the collection of states corresponding to these parts. Note that all scenarios in $A_k$  appear in $R_k(u)$ as these scenarios are uncovered even at the end of phase $k$. Similarly, all scenarios in $R_k(u)$ are 
in $A_{k-1}$.

The analysis is similar to that for Lemma~\ref{lem:main}. Analogous to the quantity $Z$ in the proof of Lemma~\ref{lem:main}, we will use: 
{\small \begin{align}\label{eq:pathgain}
\bar{Z}  := \sum\limits_{u=1}^{D}\enskip\sum\limits_{(E,H)\in R_k(u)} \max_{P\in {\cal A}(E,H,k)}g_{(E,H)}(P)\end{align}}
Above, ${\cal A}(E,H,k)$ denotes the set of feasible tours to the SOP instance solved in iteration $u$ of phase $k$, and $(E,H)$ denotes the state at the beginning of this iteration.  
 We  prove Lemma~\ref{lem:pathmain} by upper/lower bounding  $\bar{Z}$.

For any $(E,H)\in R_k(u)$, note that $E$ consists of all elements that have been selected before iteration $u$. The set of elements that are selected at iteration $u$ are \emph{not} included in $E$. 
We also define $T_H(k)$ and $\st$ as in  Section~\ref{sec:algo}. Recall, 
  $T_H(k)$ is  the subtree of OPT that corresponds to paths traced by scenarios in $H$ up to time $2^{k-1}$; this only includes elements that are completely selected by time $2^{k-1}$.   And  $\st$ is the path in $T_H(k)$ that at each node (labeled $e$) follows the branch  corresponding to $H\setminus L_e(H)$.  Again we also use $\st$ to denote the  set of elements that are on this path. We will also use the definition of ``bad", ``okay" and ``good" states from Definition~\ref{defn:type}. Then, exactly as in Proposition~\ref{obs:bad-prob1} we have:

\begin{proposition}\label{obs:pathbad-prob1}
For any
iteration $u$ in phase $k$, we have $\sum_{\substack{(E,H)\in R_k(u)\\(E,H):bad}}\Pr(H) \,\leq\,3x_k$.
\end{proposition}

\begin{lemma}\label{lem:pathGLB} We have $\bar{Z}\geq D\cdot(a_k - 3x_k)/3$. \end{lemma}
\begin{proof} Considering only the good/okay states in each $R_k(u)$ in the expression~\eqref{eq:pathgain}:
{  \begin{align*}
 \bar{Z} &= \sum\limits_{u=1}^{D}\enskip\sum\limits_{(E,H)\in R_k(u)}\max_{P\in {\cal A}(E,H,k)}\left(\sum_{i \in H} p_i\cdot\frac{f_i(E\cup P)-f_i(E)}{1-f_i(E)} + \Pr(\bigcup_{e\in P}L_e(H))\right)  \\
 &\geq  \sum\limits_{u=1}^{D} \left(\sum\limits_{\substack{(E,H)\in R_k(u)\\ (E,H):okay}} \max_{P\in {\cal A}(E,H,k)}\Pr\left(\bigcup_{e\in P}L_e(H)\right)
\,\,+\,\, \sum\limits_{\substack{(E,H)\in R_k(u)\\ (E,H):good}} \max_{P\in {\cal A}(E,H,k)}\,\sum_{i \in H}p_i\cdot\frac{f_i(E\cup P)-f_i(E)}{1-f_i(E)}\right)
\end{align*}}

Fix any iteration $u$. For any state $(E,H)\in R_k(u)$
define $W(H) = \st\setminus E$. 
Note that the cost of $\st$ is at most $2^{k-1}$, so the tour obtained by doubling this path is in ${\cal A}(E,H,k)$: i.e. the tour originates from $s$ and has cost at most $2^k$. We call this tour $\overline{W}(H)$. 

{\bf Case 1.} $(E,H)$ is an  okay state. Since $\overline{W}(H)\in {\cal A}(E,H,k)$,
  \begin{align}
 &\max_{P\in {\cal A}(E,H,k)}\Pr\left(\bigcup_{e\in P}L_e(H)\right)  \,\, \geq 
   \,\, {\Pr\left(\bigcup_{e\in W(H)}L_e(H)\right)} \,\, = \,\, \Pr\left(\bigcup_{e\in \st}L_e(H)\right) \,\, \ge \,\, \frac{\Pr(H)}{3} \label{eq:pathokay-gain}
\end{align}
The equality above uses $\cup_{e\in E} L_e(H)=\emptyset$ (by Proposition~\ref{yeselements}), and the last inequality is by Definition~\ref{defn:type} of an okay state.

{\bf Case 2.} $(E,H)$ is a good state. Below, we use $F\sse H$ to denote the set of scenarios that get covered in $\st$; by definition of a good state, we have $\Pr(F)\ge \Pr(H)/3$. Again using $\overline{W}(H) \in {\cal A}(E,H,k)$,  
{  \begin{align}
&\max_{P\in {\cal A}(E,H,k)} {\sum\limits_{i\in H} {p_i}\cdot \frac{f_i(P\cup E)-f_i(E)}{1-f_i(E)}
 \,\, 
\, \geq   \,\,  \sum\limits_{i\in H} {p_i}\cdot\frac{f_i(W(H)\cup E)-f_i(E)}{1-f_i(E)}  }\notag  \\
\label{eq:pathgood-gain-2}& { =  \,\,    {\sum\limits_{i\in H}{p_i}\cdot\frac{f_i(\st)-f_i(E)}{1-f_i(E)} }}
\ge {   \,\, {\sum\limits_{i\in F} {p_i}}  \,\, =  \,\, \Pr(F)  \,\, \ge  \,\, \frac{\Pr(H)}{3}}
\end{align}}
The first equality of \eqref{eq:pathgood-gain-2} is by definition of $W(H)$. The next inequality is based on the fact that $f_i(\st)=1$ for any $i\in F$  
and the  last inequality is by definition of a good state. Now, we combine~\eqref{eq:pathokay-gain} and
\eqref{eq:pathgood-gain-2} with the definition of $\bar{Z}$: 
\begin{align}
\bar{Z} &\geq  \sum\limits_{u=1}^{D}\sum\limits_{\substack{(E,H)\in R_k(u)\\ (E,H):okay}} \frac{\Pr(H)}{3}+ \sum\limits_{u=1}^{D}\sum\limits_{\substack{(E,H)\in R_k(u)\\ (E,H):good}} \frac{\Pr(H)}{3} \quad  =  \quad \sum\limits_{u=1}^{D}\frac{\Pr(R_k(u))-\sum_{\substack{(E,H)\in R_k(u)\\ (E,H):bad}}\Pr(H)}{3} \notag\\
&\geq  \sum\limits_{u=1}^{D}\frac{a_k-3x_k}{3} \quad = \quad \frac{D\cdot(a_k-3x_k)}{3}\notag 
\end{align}
The first equality uses the fact that the states corresponding to each $(E,H)\in R_k(u)$ are exactly one of the types bad/okay/good.  The last inequality uses Proposition~\ref{obs:pathbad-prob1} and that $R_k(u)$ contains all scenarios in $A_k$.    \end{proof}

\begin{lemma}\label{pathGUB} We have $\bar{Z}\leq a_{k-1}\cdot \rho(1+\ln\frac{1}{\epsilon}+\log m)$.
\end{lemma}
\begin{proof}For any scenario $i\in A_{k-1}$ (i.e. uncovered in ALG at the end of phase $k-1$) let $\pi_i$ be the path traced by $i$ in ALG's decision tree, starting from the end of phase $k-1$ to the end of phase $k$ or when $i$ gets covered (whatever happens first). Formally, we represent $\pi_i$ as a sequence of tuples $(E_{i u}, H_{i u}, P_{i u})$ for each iteration $u$ in phase $k$, where $(E_{i u}, H_{i u})$ is the state at the start of iteration $u$ and $P_{i u}$ is the new tour chosen by ALG at this state.

Recall that for any iteration $u$, every scenario in $R_k(u)$ appears in $A_{k-1}$. So only scenarios in $A_{k-1}$ can contribute to $\bar{Z}$, because every part  $H$ in $R_k(u)$ is a subset of $A_{k-1}$. Furthermore, since ALG-SOP is a $(\rho,\sigma)$-bicriteria approximation algorithm, it selects paths $P_u$ such that $\rho \cdot g_{(E,H)}(P_u)\geq \max_{P\in {\cal A}(E,H,k)}g_{(E,H)}(P)$. So we can bound $\bar{Z}$ from above as follows:
{\small  \begin{align}
\bar{Z} &= \sum\limits_{u=1}^{D}\enskip\sum\limits_{(E,H)\in R_k(u)} \max_{P\in {\cal A}(E,H,k)}g_{(E,H)}(P) \quad \leq\quad  \rho\cdot \sum\limits_{u=1}^{D}\enskip\sum\limits_{(E,H)\in R_k(u)} g_{(E,H)}(P_u) \notag\\
 &\leq \rho\cdot\sum\limits_{u=1}^{D}\enskip\sum\limits_{(E,H)\in R_k(u)}\left(\sum_{i \in H}\left(p_i\cdot\frac{f_i(E\cup P_u)-f_i(E)}{1-f_i(E)}\right) + \Pr(\bigcup_{e\in P_u}L_e(H))\right) \notag \\
  &= \rho\cdot\sum\limits_{u=1}^{D}\enskip\sum\limits_{(E,H)\in R_k(u)} \sum_{i \in H} p_i \cdot \left( \frac{f_i(E\cup P_u)-f_i(E)}{1-f_i(E)}  + \mathbb{1}\left[ i\in \cup_{e\in P_u}L_e(H) \right]\right)   \notag \\
   & \,\,\leq\,\,\rho\cdot \sum\limits_{i\in A_{k-1}} p_i \cdot\left( \enskip\sum\limits_{(E_{i u}, H_{iu}, P_{iu}) \in \pi_i} \left(\frac{f_i(P_{i u}\cup E_{iu})-f_i(E_{iu})}{1-f_i(E_{iu})}+\mathbb{1}[ i\in \cup_{e\in P_{iu} }L_e(H_{iu}) ] \right)\right) \label{suminterchange} \\
&\,\,=\,\, \rho\cdot \sum\limits_{i\in A_{k-1}} p_i \cdot \enskip\left(\sum\limits_{(E_{i u}, H_{i u}, P_{i u}) \in \pi_i} \frac{f_i(P_{i u}\cup E_{iu})-f_i(E_{iu})}{1-f_i(E_{iu})} \,\,+\,\, \sum\limits_{(E_{i u}, H_{i u}, P_{i u}) \in \pi_i} \mathbb{1}[i\in \cup_{e\in P_{i u}}L_e(H_{iu})]\right)  \label{eq:pathgain-UB}
\end{align}}
where the inequality~\eqref{suminterchange} is due to an interchange of summation  and the fact that each part $H$ of $R_k(u)$ is a  subset of $A_{k-1}$. Now, fix any scenario $i\in A_{k-1}$. For the first term in \eqref{eq:pathgain-UB}, we use Claim~\ref{cl:AG-eps} and the definition of $\epsilon$ in~\eqref{eq:epsilon-defn}. This implies
$\sum_{(E_{i u}, H_{i u}, P_{i u}) \in \pi_i}\frac{f_i(P_{i u}\cup E_{iu})-f_i(E_{iu})}{1-f_i(E_{iu})}  \,\, \leq  \,\,1 + \ln{\frac{1}{\epsilon}}$.
To bound the second term, note that  if at some iteration $u$ with state $(E,H)$ the algorithm selects  subset $P_u$, and if  scenario $i\in \cup_{e\in P_u} L_e(H)$ then the number of possible scenarios decreases by at least a factor of \emph{two} in  path $\pi_i$. So such an  event can happen at most $\log_2 m$ times along the path $\pi_i$. Thus we can write $\sum\limits_{(E_{i u}, H_{i u}, P_{i u}) \in \pi_i}\mathbb{1}[i \in \bigcup_{e\in P_{i u}}L_e(H_{iu})] \,\,\leq \,\,\log_2{m}$.
The lemma  follows from~\eqref{eq:pathgain-UB}.  
\end{proof}

Now we can complete the proof of Lemma~\ref{lem:pathmain}. 
\begin{proof}
By Lemma \ref{lem:pathGLB} and Lemma \ref{pathGUB} we have: 
\begin{align*} D\cdot(a_k - 3x_k)/3 \quad \leq \quad \bar{Z} \quad \leq \quad a_{k-1}\cdot\rho(1+ \ln{1/\epsilon}+\log{m})= \quad a_{k-1}\cdot \frac{D}{15}\end{align*}
Rearranging, we obtain $a_k\le 0.2\cdot a_{k-1}+3x_k$ as needed. 
\end{proof}
\section{Applications}\label{sec:appl}

In this section we discuss various applications of \ar.
For some of these applications, we obtain improvements over previously known results. For many others, we match (or nearly match) the previous best results using a simpler algorithm and analysis. Some of the applications discussed below are new, for which we provide  the first approximation algorithms. 
Table~\ref{table:apptable1}  summarizes some of these applications. As defined, cost in \ar and \ap is only incurred until the realized scenario $i^*$ gets covered and  the algorithm may not know this (see Section~\ref{subsec:defn}). This definition is suitable for the applications discssed in Sections \ref{subsec:detSR}, \ref{subsec:MIR}, \ref{subsec:matroid} and \ref{subsec:trp}.  However, for the other applications (Sections \ref{subsec:odt}, \ref{subsec:ecd}, \ref{subsec:drd}, \ref{subsec:knapsack}, \ref{subsec:ssc} and \ref{subsec:tsp})  the algorithm needs to know explicitly when to stop. For  these applications,  we also mention the stopping criteria used and show that it coincides with the (usual) criterion of just covering $i^*$. So Theorem~\ref{thm:apx} or \ref{thm:pathapx} can be applied in all cases.

\begin{table} 
\begin{center}
    \begin{tabular}{ | c | c | c | p{5cm} |}
    \hline
    {\bf Problem} & {\bf Previous best result} & {\bf Our result}  \\ \hline
     Adaptive Multiple Intent Re-ranking & - & $\mathcal{O}(\log{K}\,+\,\log{m})$  \\ \hline
    Generalized Optimal Decision Tree & - & $\mathcal{O}(\log{m})$ \\ \hline
    Decision Region Determination & $\mathcal{O}(  r  \log{m})$ in exp time& $\mathcal{O}(r\log{m})$ in  poly time   \\ \hline
    Stochastic Knapsack Cover & - & $\mathcal{O}(\log{m} + \log W)$  \\ \hline
    Stochastic Matroid Basis  & - & $\mathcal{O}(\log{m} + \log{q})$  \\ \hline
       Adaptive Traveling Repairman Problem & $\mathcal{O}(\log^2{n}\log{m})$ & $\mathcal{O}({\log}^{2+\delta}n(\log{m} + \log{n}))$  \\ \hline
    Adaptive Traveling Salesman Problem & $\mathcal{O}(\log^2{n}\log{m})$ & $\mathcal{O}({\log}^{2+\delta}n(\log{m} + \log{n}))$  \\ \hline
    \end{tabular}
         \caption{\label{table:apptable1}
Some applications of adaptive submodular ranking.  }
\end{center}
\end{table}

\subsection {Deterministic Submodular Ranking}\label{subsec:detSR}
In this problem we are given a set of $n$ elements  and $m$ monotone submodular functions $f_1, f_2, \hdots, f_m$ where each $f_i: 2^{[n]}\rightarrow [0,1]$. We also have a non-negative weight $w_i$ associated with each $i\in [m]$. The goal is to find a static linear ordering of the elements that minimizes the weighted summation of functions' cover time, where the cover time of a function $f_i$ is the first time that its value  reaches one. This is a special case of \ar where there is no feedback. Formally, we consider the \ar instance with the same $f_i$s, $G =\emptyset$, and probabilities $p_i = w_i/(\sum_{j=1}^n{w_j})$. Theorem~\ref{thm:apx} directly gives an $\mathcal{O}(\log{m}+ \log{\frac{1}{\epsilon}})$-approximation algorithm. Moreover, by observing that in~\eqref{eq:alg} for any state $(E,H)$ we have $L_e(H)=\emptyset$, we can strengthen the upper bound in Lemma~\ref{GUB} to $Z\le a_{k-1}\cdot (1+\ln 1/\epsilon)$. This implies that our algorithm is an $\mathcal{O}( \log{\frac{1}{\epsilon}})$-approximation, matching the best result in \cite{AG11} and \cite{INZ12}.
 
 \subsection {Adaptive Multiple Intents Re-ranking.}\label{subsec:MIR}
This is an adaptive version of the multiple intents re-ranking problem, introduced in \cite{AGY09} with applications to search ranking. 
There are $n$ results 
to a particular search query, and $m$ different users.
Each user $i$ is characterized by a subset $S_i$ of the results that s/he is interested in 
and a threshold $K_i\le |S_i|$:  user $i$ gets ``covered'' after seeing at least $K_i$ results from the subset $S_i$. 
There is also a probability distribution $\{p_i\}_{i=1}^m$  on the $m$ users, from which 
the realized user $i^*$ is chosen. An algorithm displays results one by one and receives feedback on $e\in S_{i^*}$, i.e. whether result $e$ is relevant to user $i^*$. 
The goal is to find an adaptive ordering of the results that minimizes the expected number of results to cover user $i^*$. We note that the algorithm need not know when this occurs, i.e. when to stop. 
 
This can be modeled as \ar with results corresponding to elements $U$ and users corresponding to the $m$ scenarios. The feedback values are $G=\{0,1\}$  and the feedback functions are given by $r_i(e) = \mathbb{1}(e\in S_i)$ for all $i\in[m]$ and $e\in U$.  For each scenario $i\in[m]$, the submodular  function  $f_i(S) = \min(|S\cap S_i|,K_i)/K_i$. Letting $K = \max_{i\in[m]} K_i$, we can see that the parameter $\epsilon$ is equal to $1/K$. So  Theorem~\ref{thm:apx} implies an $\mathcal{O}(\log K \,+\,\log{m})$-approximation algorithm. 
We note however that in the     deterministic setting,   there are better $\mathcal{O}(1)$-approximation algorithms in  \cite{BGK10}, \cite{SW11} and \cite{ImSZ14}. These results are based on a different linear-program-based approach: extending such an approach to the stochastic case is still an interesting open question.

 \def\rank{{\sf rank}}
 
\subsection {Minimum Cost Matroid Basis}  \label{subsec:matroid}
Consider the following stochastic network design problem.
We are given an undirected graph $(V,E)$ with edge costs. 
However, only a random subset $E^*\sse E$ of the edges are active. 
We assume an explicit scenario-based joint distribution for $E^*$: there are $m$ scenarios where each scenario $i\in[m]$ occurs with probability $p_i$ and corresponds to active edges $E^*=E_i$. An algorithm learns whether/not an edge $e$ is active only upon \emph{testing} $e$ which incurs time $c_e$. An algorithm needs to adaptively test a subset $S\sse E$ of edges so that $S\cap E^*$ achieves the maximum possible connectivity in the active graph $(V,E^*)$, i.e. $S\cap E^*$ must contain  a maximal spanning forest of graph $(V,E^*)$. 
The objective is to minimize the expected time before the tested edges achieve maximal connectivity in the active graph. The algorithm need not know when this occurs, i.e. when to stop. 

We can model this as an \ar instance with edges $E$ as elements and scenarios as described above. The feedback values are $G=\{0,1\}$ and $r_i(e)=\mathbb{1}(e\in E_i)$ for all $i\in [m]$ and $e\in E$. The submodular functions are 
$f_i(S) = \frac{\rank_i(S\cap E_i)}{\rank_i(E_i)}$
 where $\rank_i$ is the rank function of the graphic matroid on $(V,E_i)$.  The $f_i$s are monotone and submodular due to the submodularity of matroid rank functions. Moreover, the parameter $\epsilon$ is at least $\frac{1}{q}$ where $q=|V|$. So Theorem~\ref{thm:apx} implies an $\mathcal{O}(\log{m} + \log{q})$-approximation algorithm. 
  We note that the same result also holds for a general matroid: where a random (correlated) subset of elements is active and the goal is to find a basis over the active elements  at minimum expected cost.

\subsection {Optimal Decision Tree (ODT)} 
\label{subsec:odt}
This problem captures many applications in active learning, medical diagnosis and databases; see  e.g.~\cite{CPRAM11} and \cite{D01}. 
There are $m$ possible hypotheses 
with a probability distribution $\{p_i\}_{i=1}^m$, from which an unknown hypothesis $i^*$ is drawn. 
There are also a number of binary tests;  
each test $e$  costs $c_e$ and returns a positive outcome if $i^*$ lies in some subset $Y_e$ of hypotheses and a negative outcome if $i^*\in [m]\setminus Y_e$. 
It is  assumed that $i^*$ can be uniquely identified by performing all tests. The goal 
is to perform an adaptive sequence of tests so as to identify hypothesis $i^*$ at the minimum expected cost.  

This can be  cast as an \ar instance as follows. We associate elements with tests $U$ and scenarios with hypotheses $[m]$. The feedback values are $G=\{0,1\}$  and the feedback functions are given by $r_i(e) = \mathbb{1}(i\in Y_e)$ which denotes the outcome of test $e$ on hypothesis $i$. In order to define the submodular functions,  let 
$$ T_e(i) = \left\{
\begin{array}{ll}
[m] \setminus Y_e & \mbox{ if }i\in Y_e\\
 Y_e  & \mbox{ if }i\not\in Y_e 
\end{array}  
\right., \qquad \forall   e\in U \mbox{ and  }i\in[m].$$
Then, for each scenario $i\in[m]$, define the submodular function $f_i(S)=|\cup_{e\in S}T_e(i)|\cdot \frac{1}{m-1}$. Note that  at any point in the algorithm where we have performed a set $S$ of tests, the set $\bigcup_{e\in S} T_e(i^*)$ consists of all hypothesis that have a different outcome from $i^*$ in at least one of the tests in $S$. So  $i^*$ is uniquely identified after performing tests $S$ if and only if $f_{i^*}(S)=1$. The algorithm's stopping criterion is the first point when 
the number of compatible hypotheses/scenarios reaches one: this coincides with the point where $f_{i^*}$ gets covered. Note that  the parameter $\epsilon$ is equal to $\frac{1}{m}$; so by Theorem~\ref{thm:apx} we obtain an $\mathcal{O}(\log{m})$-approximation algorithm which is known to be best-possible (unless P=NP), as shown by \cite{CPRAM11}. Although this problem has been extensively studied, 
previously such a result was known only via a complex algorithm in \cite{GNR17} and \cite{CLS14}. We also note that our result extends in a straightforward manner to provide an $O(\log m)$ approximation in the case of \emph{multiway} tests (corresponding to more than two outcomes) as studied in \cite{CPRAM11}. 

\paragraph{Generalized Optimal Decision Tree} 
Our algorithm also extends to the setting when we do not have to uniquely identify the realized hypothesis $i^*$. Here we are given a threshold $t$ such that it suffices to output a subset $H^*$ of at most $t$ hypotheses with $i^*\in H^*$.  This can be handled easily by setting:
$$f_i(S) = \min\left\{ |\cup_{e\in S}T_e(i)|\cdot \frac{1}{m-t},\,1 \right\},\quad \mbox{ for all $S\sse U$ and }i\in [m].
$$

Note that this time we will have $f_i(S) = 1$ if and only if at least $m-t$ hypotheses differ from $i$ on at least one test in $S$; so this corresponds to having  at most $t$ possible hypotheses. The algorithm's stopping criterion here is the first point when   
the number of compatible hypotheses is at most $t$: again, this coincides with the point where $f_{i^*}$ gets covered. And Theorem~\ref{thm:apx} implies
an $\mathcal{O}(\log{m})$-approximation algorithm.
To the best of our knowledge, this is the first approximation algorithm in this setting.

\subsection {Equivalence Class Determination}\label{subsec:ecd}
This is an extension of ODT that was introduced to model noise in Bayesian active learning by \cite{GKR10}. 
As in ODT, there are $m$ hypotheses with a probability distribution $\{p_i\}_{i=1}^m$ and binary tests where each test $e$ has a positive outcome for hypotheses in $Y_e$.
We are additionally given a partition $Q$ of $[m]$. For each $i\in [m]$, let $Q(i)$ be the subset in the partition that contains $i$. The goal now is to minimize the expected cost of tests until we recognize the \emph{part} of $Q$ containing the realized hypothesis $i^*$.  

We can model this as an \ar instance with tests as elements and hypotheses as scenarios. The feedback functions are the same as in ODT. The submodular functions are:
$$f_i(S) =\frac{ |\cup_{e\in S}(T_e(i)\cap Q(i)^c)|}{|Q(i)^c|},\quad \mbox{ for all $S\sse U$ and }i\in [m].$$
Above, $T_e(i)$ are as defined above for ODT and $A^c$ denotes the complement of any set $A\sse [m]$.  Note that $f_i$s are monotone submodular  with values between 0 and 1. Furthermore, $f_i(S) = 1$ means that $Q(i)^c\subseteq \cup_{e\in S}T_e(i)$, which means that the set of  compatible hypotheses based on the tests $S$ is a subset of $Q(i)$. The algorithm's stopping criterion here is the first point when   
the set of compatible hypotheses is a subset of \emph{any} $Q(i)$, which  coincides with the point where $f_{i^*}$ gets covered. 
Again, Theorem~\ref{thm:apx} implies an $\mathcal{O}(\log{m})$-approximation algorithm. This matches the best previous result of \cite{CLS14}, and again our algorithm is much simpler.

\subsection{Decision Region Determination} \label{subsec:drd}
This is an extension of ODT that was introduced in order to allow for \emph{decision making} in Bayesian active learning. As elaborated in~\cite{JCKKBS14}, this problem has  applications in robotics, medical diagnosis and comparison-based learning. Again, there are $m$ hypotheses with a probability distribution $\{p_i\}_{i=1}^m$ and binary tests where each test $e$ has a positive outcome for hypotheses in $Y_e$. In addition, there are a number of overlapping decision regions $D_j\sse [m]$ for $j\in[t]$. Each region $D_j$ corresponds to the subset of hypotheses under which a particular decision $j\in [t]$ is applicable. The goal is to minimize the expected cost of tests so as to 
find some decision region $D_j$ containing the   realized hypothesis $i^*$. Following prior work, two additional parameters are useful for this problem:  $r$ is the maximum number of decision regions that contain a hypothesis and $d$ is the maximum size of any decision region. Our main result here is:
\begin{theorem}
There  is an $\mathcal{O}(\log m + \min(d,r\log d))$-approximation algorithm for decision region determination.\label{thm:DRD}
\end{theorem}
This improves upon a number of previous papers on decision region determination  (DRD).    \cite{JCKKBS14} obtained an $\mathcal{O}( \min(r,d) \cdot\log^2{\frac{1}{\min_i p_i}})$-approximation algorithm running in time exponential in $\min(r,d)$. Then, 
\cite{chen2015submodular} obtained an $\mathcal{O}( r \cdot\log^2{\frac{1}{\min_i p_i}})$-approximation algorithm
 for this problem in polynomial time. The approximation ratio was  later improved by \cite{GHKL16} to $\mathcal{O}(  \min(r,d) \cdot\log{m})$ which however required time exponential in $\min(r,d)$. In contrast, our algorithm runs in polynomial time. 

Before proving Theorem~\ref{thm:DRD}, we provide two different algorithms for DRD. 

{\bf Approach 1: an $O(r\log m)$-approximation algorithm for DRD.} Here we model DRD as \ar with tests as elements and hypotheses as scenarios. The feedback functions are the same as in ODT. 
For each $i\in [m]$ and $j\in [t]$ such that $i \in D_j$ define  $f_{i,j}(S) =\frac{ |\bigcup_{e\in S}(T_e(i)\cap {D_j}^c)|}{|{D_j}^c|}$.   Clearly $f_{i,j}$s are monotone submodular  with values between $0$ and $1$. Also, $f_{i,j}(S) = 1$ means that ${D_j}^c\subseteq \bigcup_{e\in S}T_e(i)$, which means that the set of  compatible hypotheses based on the tests $S$ is a subset of  decision region $D_j$. However, we may stop when it is determined that the  realized hypothesis is in \emph{any one} of the decision regions. This criterion (for hypothesis $i$) corresponds to at least one  $f_{i,j}(S)=1$ among $\{j:  i\in D_j\}$. Using an idea from \cite{guillory2011simultaneous}, we can express this criterion as a submodular-cover requirement. Define:
$$f_i(S) = 1 - \prod\limits_{j:i\in D_j}(1- f_{i,j}(S)),\quad \mbox{ for all $S\sse U$ and }i\in [m].$$
One can verify that $f_i(S)=1$ if and only if $\exists j: i\in D_j$ and $f_{i,j}(S)=1$. The algorithm's stopping criterion  is the first point when 
the set of compatible hypotheses is a subset of \emph{any} decision region $D_j$, which  coincides with the point where $f_{i^*}$ gets covered. We can also see that $f_i$ is monotone and submodular. Note that here the parameter $\epsilon$ is equal to $\min \limits_{i}\prod_{j: i\in D_j}\frac{1}{|{D_j}^c|}$, which is much smaller than in previous applications. Still, we have $\epsilon = \Omega(m^{-r})$. So in this case, Theorem~\ref{thm:apx} implies an  $\mathcal{O}(r\log{m})$-approximation algorithm where $r$ is the maximum number of decision regions that contain a hypothesis. 

{\bf  Approach 2: an $m$-approximation algorithm for DRD.} Here we use a simple greedy splitting algorithm. At any state with compatible scenarios $H\sse [m]$ the algorithm selects the minimum cost element that splits $H$. Formally, it selects:
$$\arg\min\{c_e : e\in U\mbox{ with } H\cap Y_e\ne \emptyset \mbox{ and } H\cap Y_e^c\ne \emptyset\}.$$
The algorithm terminates when the compatible scenarios $H$ is contained in any decision region. 

As  the number of compatible scenarios reduces by at least one after each chosen element, the depth of the algorithm's decision tree is at most $m$. Consider any depth $k\in \{1,\cdots m\}$ in this decision tree. Note that the states occurring at depth $k$ induce a partition of all  scenarios $I\sse [m]$ that are yet uncovered (at depth $k$). For each scenario $i\in I$, let $R_i\sse I$ denote all scenarios that are compatible with $i$ at depth $k$, and let $C_i$ denote the minimum cost of an element that splits $R_i$. Note that all scenarios $i$ occurring at the same state at depth $k$ will have the same $R_i$ and $C_i$. Moreover, the $k^{th}$ element chosen by the algorithm under any scenario  $i\in I$ costs exactly $C_i$. So the algorithm's expected cost at depth $k$ is exactly $\sum_{i\in I} p_i\cdot C_i$. The next claim shows that $OPT\ge \sum_{i\in I} p_i\cdot C_i$, which implies that the total expected cost of the algorithm is at most $m\cdot OPT$. 
\begin{claim}\label{cl:drd-greedy}
The optimal cost of the DRD instance $OPT\ge \sum_{i\in I} p_i\cdot C_i$.
\end{claim}
\begin{proof}
Consider any $i\in I$. Note that $R_i\sse I\sse [m]$ does not contain any decision region (otherwise $i$ would have been covered before depth $k$ which would contradict $i\in I$). So the optimal solution \emph{must} select some element that splits $R_i$ in its decision path for scenario $i$. As $C_i$ is the minimum cost element that splits $R_i$, it follows that the optimal cost under scenario $i$ is at least $C_i$. The claim now follows by taking expectations.   
\end{proof}

\begin{proof} This algorithm involves two phases. The  first phase runs the $O(\log m)$-approximation algorithm for \emph{generalized ODT} (Subsection~\ref{subsec:odt}) on the given set of scenarios and elements with threshold $d$ (this step ignores the decision regions). Crucially, the optimal value of this    generalized ODT instance is at most that of the DRD instance. This follows simply from the fact that every decision region has size at most $d$: so the number of compatible scenarios at the end of any feasible DRD solution is always at most $d$. So the expected cost in the first phase  is $O(\log m)\cdot OPT$.
At the end of  this phase, we will be left with a set $M$ of at most $d$ candidate scenarios and we still need to identify a valid decision region within that set. Let $\{M_1,\cdots M_s\}$ denote the partition of the $m$ scenarios corresponding to the states at the end of the generalized ODT algorithm. So we have $|M_k|\le d$ for all $k\in [s]$.

Next, in the second phase, we run one of the above mentioned algorithms on the DRD instance conditioned on scenarios $M$. For any $k\in [s]$ let ${\cal I}_k$ denote the DRD instance restricted to scenarios $M_k$  where  probabilities are  normalized so as to sum to one. Crucially,  
\begin{equation}\label{eq:drd-sub}
\sum_{k=1}^s \left(\sum_{i\in M_k} p_i\right) OPT({\cal I}_k)\le OPT,    
\end{equation}
where $OPT$ is the optimal value of the original DRD instance. \eqref{eq:drd-sub} follows directly by using the optimal tree for the original DRD instance as a feasible solution for each instance ${\cal I}_1,\cdots {\cal I}_s$. 

Note that the DRD instance in the second phase always  has at most $d$ scenarios as $\max_{k=1}^s |M_k|\le d$. So the  two algorithms above have approximation ratios of $O(r\log d)$ and $d$ respectively on this instance. Combined with \eqref{eq:drd-sub} it follows that the expected cost in the second phase is $O(\min\{r\log d, d\})\cdot OPT$. Adding the cost over both phases  proves the theorem.    
\end{proof}

\subsection {Stochastic Knapsack Cover} \label{subsec:knapsack}
In the knapsack cover problem, there are $n$ elements, each with a cost and reward. We are also given a target $W$ and our goal is to choose a subset of elements with minimum total cost such that the total reward is at least $W$. \cite{IK75} gave a  fully polynomial time approximation scheme for this problem. Here we consider a stochastic version of this problem where rewards are random and correlated across elements. Previously, \cite{DHK16} considered the case of independent rewards, and obtained a 3-approximation algorithm. We assume an explicit scenario-based distribution for the  rewards. Formally, there are $m$ scenarios where each scenario $i\in [m]$ occurs with probability $p_i$ and corresponds to  element rewards $\{r_i(e)\}_{e=1}^n$.  We also assume that all rewards are integers between $0$ and  $W$. An algorithm knows the precise reward of an element $e\in[n]$ only upon selecting $e$. The goal is to adaptively select a sequence of elements so as to achieve total reward at least $W$, at minimum expected cost. 

To model this problem as an instance of \ar, elements and scenarios are as described above. The feedback values are $G=\{0,1,...,W\}$ and the feedback functions are the rewards $r_i(\cdot)$ under each scenario $i\in [m]$. 
The submodular functions are $f_i(E) = \min(1, \frac{1}{W}\cdot \sum_{e\in E}{r_i(e)})$, where $r_i(e)$ is the reward of element $e$ under scenario $i$.
Note that $f_i(E) = 1$ if and only if the total reward of elements in $E$ is at least $W$, which is also used as the stopping criterion for the algorithm. The parameter $\epsilon$ would be equal to $w/W\geq1/W$, where $w$ is the minimum positive reward. Using Theorem~\ref{thm:apx}, we obtain an $\mathcal{O}(\log{m} + \log \frac{W}{w})$-approximation algorithm. 

We note that in the more general black-box distribution model (where we can only access the reward distribution through samples), there are hardness results that rule out any sub-polynomial approximation ratio by polynomial-time algorithms.

\subsection {Scenario Submodular Cover} \label{subsec:ssc}
This  was studied recently by \cite{GHKL16} as a way to model correlated distributions in stochastic submodular cover. 
 
We have a set $U$ of elements with costs $\{c_e\}_{e\in U}$. 
Each element when selected, provides a random feedback from a set $G$: the feedback is correlated across elements. We are given a scenario-based distribution of elements' feedback values. There are $m$ scenarios with probabilities $\{p_i\}_{i=1}^m$, from which the realized scenario $i^*$ is drawn. Each scenario $i\in[m]$ specifies the feedback $r_i(e)\in G$ for each element $e\in U$. Let $*$  denote an unknown feedback value. 
There is also a ``state based'' utility function  
$f: {(G\cup \{*\})}^U \rightarrow \mathds{Z}_{\geq 0}$ and an integer target $Q$. The function $f$ is said to be covered if its value is at least $Q$. The goal is to (adaptively) select a sequence of elements so as to cover $f$ at the minimum expected cost.

It is assumed $f$ is monotone and  submodular: as  $f$ is not a usual set function, one needs to extend the notions of monotonicity and submodularity to this setting.
For any $g, g' \in {(G\cup \{*\})}^U$, we say $g'$ is an \emph{extension} of $g$ and write $g'\succcurlyeq g$ if ${g'}_e=g_e$ for all $e\in U$ with $g_e \neq *$. For any $g \in {(G\cup \{*\})}^U$, $e\in U$ and $r\in G$, define $g_{e\leftarrow r}$ to be the vector which is equal to $g$ on all coordinates $U\setminus \{e\}$ and has value $r$ in coordinate $e$. Now, we say $f$ is:
\begin{itemize}
\item \emph{monotone} if receiving a feedback does not decrease its value, i.e.  $f(g')\geq f(g)$ for all  $g' \succcurlyeq g$.
\item \emph{submodular} if $f(g')-f({g'}_{e\leftarrow r})\leq f(g)-f({g}_{e\leftarrow r})$ 
for all $g' \succcurlyeq g$, $r\in G$ and $e\in U$ with ${g'}_e = *$.
\end{itemize}

For any subset $S\sse U$ and scenario $i\in [m]$, define $x(S,i)\in {(G\cup \{*\})}^U$ as:
$$x(S,i)_e = \left\{ 
\begin{array}{ll}
r_i(e) & \mbox{ if }e\in S\\
* & \mbox{ if } e\in U\setminus S
\end{array}\right..
$$
Note that  function $f$ is covered by subset $S\sse U$ if and only if $f(x(S,i^*))\geq Q$.

We can model scenario submodular cover as an \ar instance with elements, scenarios and feedback as above. The submodular functions are $f_i(S) = \frac{1}{Q} \cdot \min\{ f(x(S,i)), Q\}$ for all $S\sse U$ and $i\in [m]$. It can be seen that each $f_i$ is monotone submodular (in the usual set function definition). Moreover, the parameter $\epsilon\ge 1/Q$ because function $f$ is assumed to  be integer-valued. The algorithm's stopping criterion is as follows. If $S$ denotes the set of selected elements and $\theta_e\in G$ the feedback from each $e\in S$ then we stop when $f(\theta)\ge Q$ where $\theta_e=*$ for all $e\in U\setminus S$. Clearly, this is the same point when $f_{i^*}$ reaches one.
 
So Theorem~\ref{thm:apx} implies  an algorithm with approximation ratio of $\mathcal{O}(\log{m} + \log{\frac{1}{\epsilon}})$, which is at least as good as the $\mathcal{O}(\log{m} + \log{Q})$ bound in \cite{GHKL16}. We might have $\frac1\epsilon \ll Q$ for some  functions $f$, in which case our approximation ratio is slightly better than the previous one.

\subsection {Adaptive Traveling Salesman Problem}\label{subsec:tsp}
This is a stochastic version of the basic TSP that was studied in \cite{GNR17}. 
We are given a metric $(U\cup\{s\},d)$ where $s$ is a root vertex, and there is demand at some random subset $S^*\subseteq U$ of vertices. The demand distribution is scenario-based: each scenario $i\in [m]$ occurs with probability  $p_i$ and has demand subset $S^*=S_i$. We get to know whether $u\in S^*$ or not upon visiting  vertex $u\in U$. The goal is to  build an adaptive  tour originating from $s$ that visits all the demands $S^*$ at minimum expected distance.  

As described in \cite{GNR17} it suffices to solve the related ``isolation problem'' where one wants to \emph{identify} the realized scenario $i^*$ at minimum expected distance and then use an approximate TSP to visit $S_{i^*}$. The isolation problem, which can be viewed as the metric version of ODT, can be modeled  as  adaptive submodular routing (\ap) by considering vertices as elements and scenarios as above. The feedback values are $G=\{0,1\}$, and the feedback function is  $r_i(e) = \mathbb{1}(e\in S_i)$  for all $e\in U$ and $i\in [m]$. The submodular functions are exactly the same as  for the ODT problem (\S\ref{subsec:odt}) where tests correspond to vertices: for each test $e\in U$, we use $Y_e=\{i\in[m] : e\in S_i\}$. Recall that parameter $\epsilon$ is equal to $1/m$. 
So   Corollary~\ref{finalapx} implies  an $\mathcal{O}(\log{m} \cdot {\log}^{2+\delta}{n} )$-approximation  algorithm. This almost matches the best result known  which is an  $\mathcal{O}(\log^2 n \log m)$-approximation algorithm by \cite{GNR17}.

\paragraph{Adaptive $k$-Traveling Salesman Problem.} The input here is the same as  adaptive TSP with an additional number $k$, and  the goal is to minimize the expected distance taken to cover any $k$ vertices of the demand subset $S^*$. 
As for adaptive TSP, we can model this problem as an instance of \ap. The only difference is in the definition of the submodular functions, which are now $f_i(T) = \frac{\min(|T\cap S_i|, k)}{k}$ for $T\sse U$ and $i\in[m]$. The algorithm stops at the first point when it has visited $k$ demand vertices, which is the same as $f_{i^*}$ getting covered. Here, parameter $\epsilon=1/k$ and   Corollary~\ref{finalapx} implies  an $\mathcal{O}((\log{m} +\log k)\cdot {\log}^{2+\delta}{n} )$-approximation  algorithm. To the best of our knowledge, this is the first approximation algorithm for this problem.

\subsection {Adaptive Traveling Repairman Problem} \label{subsec:trp}
This is a stochastic version of the traveling repairman problem (TRP) which was also  studied in \cite{GNR17}. The setting is the same as adaptive TSP, but the  objective here is to minimize the expected sum of  distances to reach the demand vertices $S^*$.  

We now show that this can also be viewed as a special case of \ap. Let $\mathcal{J}$ be a given instance of adaptive TRP with metric $(U\cup\{s\},d)$, root $s$ and demand scenarios $\{S_i\sse U\}_{i=1}^m$ with probabilities  $\{p_i\}_{i=1}^{m}$. Let $q =\sum_{i=1}^{ m }p_i|S_i|$. We create an instance $\mathcal{I}$ of \ap with  elements  $U$, $\sum_{i=1}^m |S_i|$ scenarios and feedback values $G=\{0,1\}$. For each  $i\in [m]$ and $e\in S_i$ we define scenario $h_{e,i}$ as follows:
\begin{itemize}
\item $h_{e,i}$ has probability of occurrence $ p_i/q $. 
\item the submodular function $f_{e,i}(T) = |\{e\}\cap T|$ for $T\sse U$.
\item $r_{e,i}(e') = \mathbb{1}(e'\in S_i)$ for $e'\in U$.
\end{itemize}
 Note that the total probability of these $\sum_{i=1}^m |S_i|$ scenarios is one. 
The idea is that  covering scenario $h_{e,i}$ in ${\cal I}$ corresponds to visiting vertex $e$ when the realized scenario  in ${\cal J}$  is $i$. Note that for any $i\in [m]$, the feedback functions for  all the scenarios $\{h_{e,i}:e\in S_i\}$ are identical.

\begin{claim}\label{clm:adTRP}
$OPT(\mathcal{I})= \frac{1}{q}\cdot OPT(\mathcal{J})$. 
\end{claim}
\begin{proof}
Consider an optimal solution $R$  to the adaptive TRP instance $\mathcal{J}$. For each scenario $i\in [m]$, let $\tau_i$ denote the tour (originating from $s$) traced by $R$; note that $\tau_i$ visits every vertex in $S_i$, and let $C_{e,i}$ denote the distance to vertex $e\in S_i$ along $\tau_i$. So $OPT({\cal J}) = \sum_{i=1}^m p_i \sum_{e\in S_i} C_{e,i}$.  We can also view $R$ as a potential solution for the \ap instance ${\cal I}$. To see that this is a feasible solution, note that the  tour traced by $R$ under scenario $h_{e,i}$ (for any $i\in[m]$ and $e\in S_i$) is precisely the prefix of $\tau_i$ until vertex $e$, at which point the tour returns to $s$. So every scenario in ${\cal I}$ is covered. Moreover, the expected cost of $R$ for ${\cal I}$ is exactly $\sum_{i=1}^m \sum_{e\in S_i} \frac{p_i}{q} C_{e,i} = \frac1q \cdot OPT({\cal J})$. This shows that 
$OPT({\cal I}) \le  \frac1q \cdot OPT({\cal J})$.  

Now, consider an optimal solution $R'$ to the \ap instance ${\cal I}$. 
For each scenario $h_{e,i}$ (with $i\in[m]$ and $e\in S_i$), let $\sigma_{e,i}$ denote the tour (originating from $s$) traced by $R'$ and let $\tau_{e,i}$ denote the shortest prefix of $\sigma_{e,i}$ that covers $f_{e,i}$.  Let 
$C'_{e,i}$ denote the cost of the walk $\tau_{e,i}$, which is the cost under scenario $h_{e,i}$. So $OPT({\cal I}) = \sum_{i=1}^m\sum_{e\in S_i} \frac{p_i}{q} C'_{e,i}$  Note that for each $i\in[m]$, the tours $\{\sigma_{e,i} : e\in S_i\}$ are identical (call it $\sigma_i$) because the feedback obtained under scenarios $\{h_{e,i} : e\in S_i\}$ are identical. So the walks $\{\tau_{e,i} : e\in S_i\}$ must be nested. We now view $R'$ as a potential solution for the adaptive TRP instance ${\cal J}$. To see that this is feasible, note that the tour traced under scenario $i\in[m]$ is precisely $\sigma_i$ which visits all vertices in $S_i$. Moreover, due to the nested structure of the walks $\{\tau_{e,i} : e\in S_i\}$, the distance to any vertex $e\in S_i$ under scenario $i$ is exactly $C'_{e,i}$. So the expected cost of $R'$ for ${\cal J}$ is $\sum_{i=1}^m p_i \sum_{e\in S_i} C'_{e,i} = q\cdot OPT({\cal I})$. This shows that 
 $OPT(\mathcal{J})\leq q\cdot OPT(\mathcal{I})$.
 
 Combining the above two bounds, we obtain $OPT(\mathcal{I})= \frac{1}{q}\cdot OPT(\mathcal{J})$ as desired. 
 
\end{proof}

Moreover, $\epsilon=1$ for this \ap instance.  Hence,   
Corollary~\ref{finalapx} implies an  $\mathcal{O}(\log m \cdot {\log}^{2+\delta}n)$-approximation algorithm for adaptive TRP. Again, this almost matches the best result known for this problem which is an  $\mathcal{O}(\log^2 n \log m)$-approximation algorithm by \cite{GNR17}. While our approximation ratios  for adaptive TSP and TRP are slightly worse than those in \cite{GNR17}, we obtain these results as direct applications of more general framework (\ap) with very little problem-specific work. 

\section{Experiments}
\label{sec:experiments}

We present experimental results for the Optimal Decision Tree (ODT) and Generalized  ODT problems.
We use expected number of elements 
as the objective, i.e. all costs are unit.
The main difference between ODT and Generalized ODT is in the stopping criteria,  which makes their coverage functions ($f_i$s) different. Recall that in ODT, our goal is to uniquely identify  the realized  scenario. As discussed in Section \ref{subsec:odt}:
\begin{equation}\label{ODT_fi}
f_i(S)=|\cup_{e\in S}T_e(i)|\cdot \frac{1}{m-1},
\end{equation}
where $T_e(i)$ is the set of all scenarios which have a different outcome from scenario $i$ on test $e$. On the other hand, for Generalized ODT, we satisfy the scenario as soon as   the number  of compatible  scenarios is at most $t$, for some input parameter $t$. Here we have:
\begin{equation}\label{GODT_fi}
f_i(S) = \min\left\{ |\cup_{e\in S}T_e(i)|\cdot \frac{1}{m-t},\,1 \right\}
\end{equation}

\subsection{Datasets}
{\bf Real-world Dataset:} 
For our experiments we used a real-world dataset, called WISER \footnote{http://wiser.nlm.nih.gov/}. It contains  information related to 79 binary symptoms (corresponding to elements in ODT) for 415 chemicals (equivalent to scenarios in ODT) which is used in the problem of toxic chemical identification of someone who has been exposed to these chemicals. This dataset has been used for testing algorithms for similar problems in other papers, eg. \cite{BBS11}, \cite{BBS12} and \cite{bhavnani2007network}. For each  symptom-chemical pair the data specifies whether/not that symptom is seen for that chemical. However the WISER data 
has `unknown' entries for  some  pairs. In order to obtain  instances for ODT from this, 
we generated 10 different datasets by assigning random binary values to the `unknown' entries. Then we removed all identical scenarios: otherwise ODT would not be feasible. As probability distributions, we  used permutations of the power-law distribution ($Pr[X=x] =K{}x^{\alpha}$) for $\alpha=0,-1/2,-1$ and $-2$. 
To be able to compare  results meaningfully, the same permutation was used for each $\alpha$ across all 10 datasets.

{\bf Synthetic Dataset:}
We  also used a synthetic dataset --- SYN-K --- that is parameterized by $k$; this is based on a hard instance for the greedy algorithm \cite{KPB99}.
Given $k$, this instance has $m=2k+1$ scenarios and $n=k+2$ elements as follows:
\[
\left\{\small{
\begin{array}{rl}
&\hspace{-0.4cm}\mbox{Scenario $i\in[1,k]$ has positive feedback on element $i$ and $k+1$ and negative on the others.} \\
&\hspace{-0.4cm}\mbox{Scenario $i\in[k+1,2k]$ has positive feedback on element $i-k$ and $k+2$ and negative on the others.} \\
&\hspace{-0.4cm}\mbox{Scenario $2k+1$ has negative feedback on all elements.}
\end{array} }\right.
\]

Also, the probabilities for the scenarios are as follows:
\[
\mbox{$p_i = p_{i+k} = 2^{-i-2} $ for $i\in{}[1,k-1]$}, \quad \mbox{$p_k = p_{2k} = 2^{-k-1}$} \quad \mbox{ and \quad $p_{2k+1} = 2^{-1}$}.
\]

\subsection{Algorithms}
In our experiments, we compare and contrast the results of four  different  algorithms:
\begin{itemize}
\vspace{-1mm}
\item {\bf ASR:} Our algorithm that uses the objective described in~\eqref{eq:alg} with corresponding $f_i$s described in equations \eqref{ODT_fi} and \eqref{GODT_fi}, for ODT and Generalized ODT respectively.
\vspace{-1mm}
\item {\bf Greedy:} This is a classic greedy algorithm described in \cite{KPB99}, \cite{D01}, \cite{AH12}, \cite{CPRAM11}, \cite{GB09}. At each iteration, it chooses the element which keeps the decision tree as balanced as possible. More formally at each state $(E,H)$ we choose an element $e\in U\setminus E$ that minimizes:
$$|\Pr(i\in H:r_i(e)=1)-\Pr(i\in H:r_i(e)=0)|$$
While the rule is the same for ODT and Generalized ODT, the set of uncovered compatible scenarios  may be different,   which affects the sequence of chosen elements.
\vspace{-1mm}
\item{\bf Static:} This is the algorithm from \cite{AG11}. This algorithm is not feedback dependent and uses a measure which is similar to the second term in our measure~\eqref{eq:alg}. More specifically, this algorithm at each iteration chooses an element $e$ that maximizes: $$\sum\limits_{i\in H} p_i \cdot \frac{f_i(e\cup E)-f_i(E)}{1-f_i(E)}$$ 
with corresponding $f_i$s for each problem, described in equations~\eqref{ODT_fi} and \eqref{GODT_fi}.
\vspace{-1mm}
\item{\bf AdStatic} This  is a modified version of the aforementioned Static algorithm. It uses the observed feedback to  skip redundant elements that have the same outcome on all the uncovered compatible scenarios.
\end{itemize}
\subsection{Results}
The performance of these four algorithms are reported in the tables below.  For each dataset, we show  normalized costs which is the actual cost divided by the  minimum cost over all algorithms. The best  algorithm is marked bold. For ODT, we also report (as ``Best cost'') the actual minimum  cost over the four algorithms. \\
\begin{table}
\footnotesize
\centering
\begin{tabular}{|c|c|c|c|c|c|c|c|c|c|c|}
\hline
\backslashbox[30mm]{Algorithm}{Dataset}  & 1 & 2 & 3 & 4 & 5 & 6 & 7 & 8 & 9 & 10 \\\hline
{$ASR$}  & {\bf 1.000} &{\bf 1.000} &{\bf 1.000} &{\bf 1.000} &{\bf 1.000} &{\bf 1.000} &{\bf 1.000} &{\bf 1.000} &{\bf 1.000} &{\bf 1.000} \\\hline
{$Greedy$}  & {\bf 1.000} &{\bf 1.000} &{\bf 1.000} &{\bf 1.000} &{\bf 1.000} &{\bf 1.000} &{\bf 1.000} &{\bf 1.000} &{\bf 1.000} &{\bf 1.000} \\\hline
{$Static$}  & 1.179 & 1.189 & 1.180 & 1.211 & 1.190 & 1.191 & 1.218 & 1.166 & 1.193 & 1.203\\\hline
{$AdStatic$}  & 1.035 & 1.038 & 1.033 & 1.036 & 1.033 & 1.033 & 1.043 & 1.035 & 1.032& 1.036\\\hline
{$Best~cost$} & 8.704 & 8.719 & 8.717 & 8.706 & 8.713 & 8.742 & 8.717 & 8.697 &  8.723 & 8.736 \\\hline
\end{tabular}
\vspace{-2mm}
\caption{Normalized  costs for ODT   with uniform distribution}
\label{tbl:odt:uniform:norm}
\end{table}
\begin{table}
\footnotesize
\centering
\begin{tabular}{|c|c|c|c|c|c|c|c|c|c|c|}
\hline
\backslashbox[30mm]{Algorithm}{Dataset}  & 1 & 2 & 3 & 4 & 5 & 6 & 7 & 8 & 9 & 10 \\\hline
{$ASR$}  & 1.001 & 1.002 & 1.001 & 1.003 & 1.002 & 1.003 & 1.001 & 1.003 & 1.001 & 1.003 \\\hline
{$Greedy$}  & {\bf 1.000} &{\bf 1.000} &{\bf 1.000} &{\bf 1.000} &{\bf 1.000} &{\bf 1.000} &{\bf 1.000} &{\bf 1.000} &{\bf 1.000} &{\bf 1.000} \\\hline
{$Static$}  & 1.203 & 1.207& 1.193& 1.231 & 1.214 & 1.191 & 1.233 & 1.222 & 1.262 & 1.222\\\hline
{$AdStatic$}  & 1.069 & 1.063 & 1.065 & 1.059 & 1.066 & 1.058 &1.067 & 1.063 & 1.071 & 1.069\\\hline
{$Best~cost$}  & 8.415 & 8.427 & 8.429 & 8.400& 8.422 & 8.449 & 8.419 & 8.403& 8.431 & 8.449\\\hline
\end{tabular}
\vspace{-2mm}
\caption{Normalized  costs for ODT   with power-law distribution  $\alpha=-1/2$}
\label{tbl:odt:1/sqrt{i}:norm}
\end{table}
\begin{table}
\centering
\footnotesize
\begin{tabular}{|c|c|c|c|c|c|c|c|c|c|c|}
\hline
\backslashbox[30mm]{Algorithm}{Dataset}  & 1 & 2 & 3 & 4 & 5 & 6 & 7 & 8 & 9 & 10 \\\hline
{$ASR$}  & 1.038 & 1.051 & 1.010 & {\bf 1.000} & 1.005 & {\bf 1.000} & 1.024 & 1.027 & 1.041 & 1.006 \\\hline
{$Greedy$}  & {\bf 1.000} & {\bf 1.000} & {\bf 1.000} &1.008 & {\bf 1.000} & 1.005 & {\bf 1.000}& {\bf 1.000} & {\bf 1.000} & {\bf 1.000}\\\hline
{$Static$}  & 1.308 & 1.361 & 1.320 & 1.320 & 1.284& 1.336 & 1.335& 1.345 & 1.339 & 1.383\\\hline
{$AdStatic$}  & 1.199 & 1.250 & 1.193 &1.209& 1.149 & 1.195 & 1.198 & 1.237& 1.187 & 1.237\\\hline
{$Best~cost$}  & 7.097 & 7.075 & 7.214 &7.082& 7.302 & 7.398  & 7.048& 7.099 & 7.156 & 7.122\\\hline
\end{tabular}
\vspace{-2mm}
\caption{Normalized  costs for ODT   with power-law distribution   $\alpha=-1$}
\label{tbl:odt:1/i:norm}
\end{table}
\begin{table}
\centering
\footnotesize
\begin{tabular}{|c|c|c|c|c|c|c|c|c|c|c|}
\hline
\backslashbox[30mm]{Algorithm}{Dataset}  & 1 & 2 & 3 & 4 & 5 & 6 & 7 & 8 & 9 & 10 \\\hline
{$ASR$}  & 1.118 & 1.153 & 1.011 & 1.116 &{\bf 1.000} & {\bf 1.000} & {\bf 1.000} & 1.112 & 1.124 & {\bf 1.000} \\\hline
{$Greedy$}  & {\bf 1.000} & {\bf 1.000} & {\bf 1.000} &{\bf 1.000} & 1.050 & 1.193 & 1.096& {\bf 1.000} & {\bf 1.000} & 1.011\\\hline
{$Static$}  & 1.684 & 1.271 & 1.435 & 1.397 & 1.136& 1.336 & 1.867& 1.328 & 1.548 & 1.531\\\hline
{$AdStatic$}  & 1.624 & 1.235 & 1.414 &1.366&1.112& 1.293& 1.604 & 1.269& 1.468 & 1.364\\\hline
{$Best~cost$}  & 3.721 & 4.085 & 4.753 &4.149 & 5.884 & 4.195 & 4.267 & 4.373 & 4.224 & 4.952\\\hline
\end{tabular}
\vspace{-2mm}
\caption{Normalized  costs for ODT   with power-law distribution   $\alpha=-2$}
\label{tbl:odt:1/i2:norm}
\end{table}

\textbf{ODT:}
Table \ref{tbl:odt:uniform:norm} shows the expected costs of these algorithms for the ODT problem with uniform distribution. It turns out ASR and Greedy algorithms have the same cost for all datasets, while they both outperform Static and AdStatic. 
Table~\ref{tbl:odt:1/sqrt{i}:norm} shows the results when we have power-law distribution with $\alpha=-1/2$. 
Greedy does slightly better than ASR on all instances;    both Greedy and ASR are much better than Static and AdStatic.
Table~\ref{tbl:odt:1/i:norm} has the results   for power-law distribution with $\alpha=-1$.  Both Greedy and ASR still outperform Static and AdStatic on all instances. ASR achieves the best solution on 2 out of 10 instances, whereas Greedy is the best on the others.
Table~\ref{tbl:odt:1/i2:norm} is for power-law distribution with $\alpha=-2$. Here, ASR is the best on 4 out of 10 instances, and again both greedy and ASR outperform Static and AdStatic.
\begin{table}
\footnotesize
\begin{minipage}[b]{0.5\linewidth}
\begin{tabular}{|c|c|c|c|c|c|}
\hline
\backslashbox[16mm]{Alg}{Th}  & 1 & 2 & 3 &4 &5 \\\hline
{$ASR$}  & {\bf 1.000}	&{\bf 1.000}	&{\bf 1.000}	&{\bf 1.000}	&{\bf 1.001} \\\hline
{$Greedy$}  & {\bf 1.000}	&{\bf 1.000}	&{\bf 1.000}	&{\bf 1.000}	&{\bf 1.001} \\\hline
{$Static$}  & 1.192	&1.088	&1.111	&1.061	&1.008\\\hline
{$AdStatic$}  & 1.035	&1.040	&1.088	&1.050	&1.003\\\hline
\end{tabular}
\vspace{-1mm}
\caption{Average cost for Generalized ODT \\\vspace{-4mm}  with uniform distribution}
\label{tbl:godt:uniform:norm}
\end{minipage}%
\begin{minipage}[b]{0.5\linewidth}
\begin{tabular}{|c|c|c|c|c|c|}
\hline
\backslashbox[16mm]{Alg}{Th}  & 1 & 2& 3 & 4 &5 \\\hline
{$ASR$}  & 1.003&	{\bf 1.000}&	{\bf 1.000}	&{\bf 1.000	}&1.004 \\\hline
{$Greedy$}  &{\bf 1.000}&	1.005&	1.010	&1.007&	{\bf 1.002} \\\hline
{$Static$}  & 1.218	&1.126	&1.084	&1.084	&1.054\\\hline
{$AdStatic$}  & 1.065	&1.075	&1.060	&1.068	&1.050 \\\hline
\end{tabular}
\vspace{-1mm}
\caption{Average cost for Generalized ODT \\\vspace{-4mm}  with power-law distribution   $\alpha=-1/2$}\label{tbl:godt:1/sqrt{i}:norm}
\end{minipage}
\end{table}

\begin{table}
\footnotesize
\begin{minipage}[b]{0.5\linewidth}
\vspace{5mm}
\begin{tabular}{|c|c|c|c|c|c|}
\hline
\backslashbox[16mm]{Alg}{Th}  & 1 & 2 & 3 &4 &5 \\\hline
{$ASR$}  & 1.020	&1.010&	{\bf 1.004}&	1.085	&1.064\\\hline
{$Greedy$}  & {\bf 1.001}	&{\bf 1.004}	&1.010	&{\bf 1.000}&	{\bf 1.000}\\\hline
{$Static$}  & 1.333&	1.213&	1.177&	1.120&	1.111\\\hline
{$AdStatic$}  & 1.205&	1.176&	1.163&	1.113&	1.108 \\\hline
\end{tabular}
\vspace{-1mm}
\caption{Average cost for Generalized ODT \\\vspace{-4mm}  with power-law distribution   $\alpha=-1$}\label{tbl:godt:1/i:norm}
\end{minipage}%
\begin{minipage}[b]{0.5\linewidth}
\begin{tabular}{|c|c|c|c|c|c|}
\hline
\backslashbox[16mm]{Alg}{Th}  & 1 & 2& 3 & 4 &5 \\\hline
{$ASR$}  & 1.063&	1.048&	1.074&	{\bf 1.041}&	{\bf 1.043} \\\hline
{$Greedy$}  & {\bf 1.035}	&{\bf 1.038}&	{\bf 1.045}&	1.058	&1.059 \\\hline
{$Static$}  & 1.453&	1.356&	1.324&	1.285&	1.258\\\hline
{$AdStatic$}  & 1.375&	1.342&	1.315	&1.282	&1.256\\\hline
\end{tabular}
\vspace{-1mm}
\caption{Average cost for Generalized ODT \\\vspace{-4mm}  with    power-law distribution   $\alpha=-2$}\label{tbl:godt:1/i2:norm}
\end{minipage}
\end{table}

\textbf{Generalized ODT:} For these tests, we report the average (normalized) costs for each distribution and threshold. Each entry is an average over the 10 datasets.  Table \ref{tbl:godt:uniform:norm} is for the   uniform distribution,
Table~\ref{tbl:godt:1/sqrt{i}:norm} is for power-law $\alpha=-1/2$, 
Table~\ref{tbl:godt:1/i:norm}  is for power-law $\alpha=-1$ and 
 Table~\ref{tbl:godt:1/i2:norm}  is for power-law $\alpha=-2$. 
 ASR performs the best in about half the settings, and Greedy is the best in the others.  
Note that the best average-number is more than $1$ in some cases: this shows that the corresponding algorithm was not the best on all 10 datasets. As for ODT, we see that both ASR and Greedy are better   than Static and AdStatic in all  cases.

{\bf Results on synthetic data:}
 Table~\ref{tbl:odt:syn:norm} shows the results  on the  synthetic instances.  ASR and AdStatic have the best result simultaneously, and Greedy's performance is much worse. It is somewhat surprising that even  
 Static performs much better than Greedy.
 
\begin{table}[h]
\centering
\footnotesize
\begin{tabular}{|c!{\thickvrule}>{\centering}p{6.85mm}|>{\centering}p{6.85mm}|>{\centering}p{7.5mm}!{\thickvrule}>{\centering}p{7.5mm}|>{\centering}p{7.5mm}|>{\centering}p{7.5mm}!{\thickvrule}>{\centering}p{7.5mm}|>{\centering}p{7.5mm}|>{\centering}p{7.5mm}!{\thickvrule}>{\centering}p{7.5mm}|>{\centering}p{7.5mm}|c|}
\hline
{Dataset} &\multicolumn{3}{c!{\thickvrule}}{SYN-50} & \multicolumn{3}{c!{\thickvrule}}{SYN-100} & \multicolumn{3}{c!{\thickvrule}}{SYN-150} & \multicolumn{3}{c!{\thickvrule}}{SYN-200}  \\\Xhline{1.5pt}
\backslashbox[16mm]{Alg}{Th} &1 &3 &5 &1 &3 &5 &1 &3 &5 &1 &3 &5 \\\Xhline{1.5pt}
 {$ASR$}  &1.00 &1.00 &1.00 &1.00 &1.00 &1.00 &1.00 &1.00 &1.00 &1.00 &1.00 &1.00 \\\hline
  {$Static$}  &1.09 &1.09&1.09&1.09&1.09&1.09&1.09&1.09&1.09&1.09&1.09&1.09\\\hline
   {$AdStatic$}  &1.00 &1.00 &1.00 &1.00 &1.00 &1.00 &1.00 &1.00 &1.00 &1.00 &1.00 &1.00\\\hline
{$Greedy$}  &9.64 &9.46 &9.27 &18.73 &18.55 &18.36 &27.82 &27.64 &27.46 &36.91 &36.73 &36.55 \\\hline
\end{tabular}
\vspace{-1mm}
\caption{
Normalized  cost for ODT and Generalized ODT  on SYN-K.
}
\label{tbl:odt:syn:norm}
\normalsize
\end{table}

{\bf Summary: } Both ASR and Greedy perform well on the real 
dataset and the difference in their objectives is typically small. The largest gaps were for ODT with power-law distribution $\alpha=-2$ (Table~\ref{tbl:odt:1/i2:norm}) where Greedy is 19\% worse than ASR on data 6  and ASR is 15\% worse than Greedy on data 2.  

Combined with the fact that Greedy performs poorly on  worst-case instances (Table~\ref{tbl:odt:syn:norm}), we think that ASR is  a good alternative for Greedy in practice. 
We also observe that it is important  to use adaptive algorithms for ODT on the real dataset, as Static consistently performs the worst. For ODT, static is 
on average 30\% worse than the best algorithm, and for Generalized ODT it is on average  18\% worse.

\paragraph{Acknowledgement}{Part of V. Nagarajan's work was done while visiting the Simons institute for theoretical computer science (UC Berkeley). The authors  thank Lisa Hellerstein for a clarification on \cite{GHKL16} regarding the OR construction of submodular functions. A preliminary version of this paper appeared in the proceedings of \emph{Integer Programming and Combinatorial Optimization} (IPCO) 2017 as \cite{asr-ipco}. V. Nagarajan and F. Navidi were supported in part by NSF CAREER grant CCF-1750127. }

\end{document}